\documentclass[a4paper]{llncs}
\usepackage[ruled,linesnumbered,lined,boxed,procnumbered]{algorithm2e}
\usepackage{tikz,float,comment,amsmath,amsfonts,amssymb}
\newtheorem{observation}{Observation}

\definecolor{lightergray}{gray}{1}
\SetCommentSty{mycommfont}

\newlength{\LPlhbox}
\setcounter{secnumdepth}{4}

\author{S.M.Dhannya \and N.S.Narayanaswamy}
\institute{Department of Computer Science and Engineering, \\ IIT Madras, Chennai, India.}

\begin{document}
\title{Conflict-Free Colouring using Maximum Independent Set and Minimum Colouring}

\maketitle


\begin{abstract}
Given a hypergraph $H$, 
the conflict-free colouring problem is to colour vertices of $H$ using minimum colours so that each hyperedge in $H$ sees a unique colour.
We present a polynomial time reduction from the conflict-free colouring problem in hypergraphs to the maximum independent set problem in a class of simple graphs, which we refer to as \textit{conflict graphs}.  
We also present another characterization of the conflict-free colouring number in terms of the chromatic number of graphs in an associated family of simple graphs, which we refer to as \textit{co-occurrence graphs}. 
We present perfectness results for co-occurrence graphs and a special case of conflict graphs. Based on these results and a linear program that returns an integer solution in polynomial time, we obtain a polynomial time algorithm to compute a minimum conflict-free colouring of interval hypergraphs, thus solving an open problem due to Cheilaris et al.\cite{CPLGARSS2014}.
Finally, we use the co-occurrence graph characterization to prove that for an interval hypergraph, the conflict-free colouring number is the minimum partition of its intervals into sets such that each set has an exact hitting set (a hitting set in which each interval is hit exactly once).
\end{abstract}

%
\section{Introduction}\label{sec:intro}
A colouring of the vertices of a hypergraph $H = (\mathcal{V},\mathcal{E})$ is called  \textit{conflict-free} if every hyperedge $e$ has at least one vertex that has a colour different from other vertices in $e$. Following Smorodinsky et al. \cite{CPLGARSS2014}, we abbreviate conflict-free to \textit{CF} in this paper. 
 The minimum number of colours with which a hypergraph $H$ can be CF coloured is the CF colouring number of the hypergraph and it is denoted by $\chi_{cf}(H)$. Computing the CF colouring number of a given hypergraph and a corresponding colouring is the CF colouring problem. Motivated by a frequency assignment problem in cellular networks, Even, Lotker, Ron and Smorodinsky \cite{ELRS2003} introduced this problem and published the first paper on CF colouring. In mobile communication networks, one must assign frequencies to base stations, such that every client that comes under the transmission range of multiple base stations, can associate itself to a unique base station without any interference from another base station. We can view the transmission range of various base stations as geometric regions in the 2-dimensional plane. CF colouring problem also finds applications in other areas like RFID (Radio Frequency Identification) networks, robotics and computational geometry (See the survey by Somorodinsky\cite{Sm2013}).
The survey due to Smorodinsky \cite{Sm2013}  presents a general framework for CF colouring a hypergraph. 
They showed that if for every induced sub-hypergraph $H' \subseteq H$,  the chromatic number of $H'$ is at most $k$, then $\chi_{cf}(H) \leq \log_{1+ \frac{1}{k-1}} n = O(k \log n)$, where $n = |\mathcal{V}|$. Pach and Tardos \cite{PJGT2009} have shown that if $|\mathcal{E}(H)| <$ $ \binom{s}{2}$ for some positive integer $s$, and $\Delta$ is the maximum degree of a vertex in $H$, then $\chi_{cf}(H)  < s$ and $\chi_{cf}(H) \leq \Delta + 1$. Since its inception, the CF colouring problem has been studied on different types of hypergraphs. Even et al. \cite{ELRS2003} have studied a number of hypergraphs induced by geometric regions on the plane including discs, axis-parallel rectangles, regular hexagons, and general congruent centrally symmetric convex regions in the plane. Let $\mathcal{D}$ be a set of $n$ finite discs in $\mathbb{R}^2$. 
For a point $p \in \mathbb{R}^2$, define $r(p) = \{D \in \mathcal{D} : p \in D\}$. The hypergraph $(\mathcal{D}, \{r(p)\}_{p \in D})$, denoted by $H(\mathcal{D})$, is called the hypergraph induced by $\mathcal{D}$. Smorodinsky showed that $ \chi_{cf}(H(\mathcal{D})) \leq \log_{4/3} n$ \cite{Sm2007}. Similarly, if $\mathcal{R}$ is a set of $n$ axis-parallel rectangles in the plane, then, $\chi_{cf}(H(\mathcal{R})) = O(\log^2n)$. There have been many studies on hypergraphs induced by neighbourhoods in simple graphs. Given a simple graph $G = (V,E)$, the \textit{open neighbourhood} (or simply neighbourhood) of a vertex $v \in V$ is defined as follows: $N(v) = \{u \in V | (u,v) \in E\}$. The set $N(v) \cup v$ is known as the \textit{closed neighbourhood} of $v$. Pach and Tardos \cite{PJGT2009} have shown that the vertices of a graph $G$ with maximum degree $\Delta$ can be coloured with $O(\log^{2+\epsilon} \Delta)$ colours, so that the closed neighbourhood of every vertex in $G$ is CF coloured. They also showed that if the minimum degree of vertices in $G$ is $\Omega(\log \Delta)$, then the open neighbourhood can be CF coloured with at most $O(\log^2 \Delta)$ colours. Abel et al. \cite{Abel2017} gave the following tight worst-case bound for neighbourhoods in planar graphs: three colours are sometimes necessary and always sufficient. Keller and Smorodinsky \cite{Keller2018} studied conflict-colourings of intersection graphs of geometric objects. They showed that the intersection graph of $n$ pseudo-discs in the plane admits a CF colouring with $O(\log n)$ colours, with respect to both closed and open neighbourhoods. Ashok et al. \cite{MAXCFC2015} studied an optimization variant of the CF colouring problem, namely \textsc{Max}-CFC. Given a hypergraph $H = (\mathcal{V},\mathcal{E})$ and integer $r \geq 2$, the problem is to find a maximum-sized subfamily of hyperedges that can be CF coloured with $r$ colours. They have given an exact algorithm running in $O(2^{n+m})$ time. The paper also studies the problem in the parametrized setting where one must find if there exists a subfamily of at least $k$ hyperedges that can be CF coloured using $r$ colours. They showed that the problem is FPT and gave an algorithm with running time $2^{O(k \log \log k + k \log r)} (n+m)^{O(1)}$.

Another line of results in the literature focus on  discrete interval hypergraphs (refer Section \ref{sec:Prelims} for definition)\cite{ELRS2003}. It was shown that a hypergraph formed by all intervals on  a set of $n$ points on a line can be CF coloured using $\Theta(\log n)$ colours. Chen et al.\cite{CFKLMMPSSWW2006} presented results on an online variant of this problem, where a point has to be assigned a colour upon its arrival and the resulting colouring should be conflict-free with respect to all intervals. They gave a greedy algorithm that uses $\Omega(\sqrt n)$ colours, a deterministic algorithm that uses $\Theta(\log^2n)$ colours and a randomized algorithm that uses $O(\log n)$ colours. The case when interval hypergraphs have a given set of intervals as the hyperedges has been of interest \cite{CPLGARSS2014,KATZ}.  Katz et al.\cite{KATZ} gave a polynomial time algorithm for CF colouring an interval hypergraph with approximation ratio 4. Cheilaris et al.\cite{CPLGARSS2014} improved this result in their paper on \textit{$k$-strong CF colouring} problem. The authors gave a polynomial-time approximation algorithm with approximation ratio 2 for $k = 1$ and $5 - \frac{2}{k}$, when $k \geq 2$. Further, they presented a quasipolynomial time algorithm for the decision version of the $k$-strong CF colouring problem. 

\subsection{Preliminaries}\label{sec:Prelims}
We use $deg(v)$ to denote the degree of a vertex $v$. Other definition and notations are from West \cite{West} and Smorodinsky \cite{Sm2013}.
A hypergraph $H_n=([n],\mathcal{I}_n)$, where $[n] = \{1, \ldots, n\}$  and $\mathcal{I}_n=\big\{ \{i, i+1, \ldots, ,j\} \mid i \leq j \text{ and } i,j \in [n] \big\}$ is known as a \textit{discrete interval hypergraph} \cite{CPLGARSS2014}. 
Naturally, a hyperedge in $\mathcal{I}_n$ is called an interval.
A hypergraph such that the set of hyperedges is a family of intervals $\mathcal{I} \subseteq \mathcal{I}_n$ is known as an \textit{interval hypergraph}. In an interval $I = \{i,i+1,\ldots,j\}$, $i$ and $j$ are the \textit{left} and \textit{right endpoints} of $I$ respectively, denoted by $l(I)$ and $r(I)$, respectively.
Since an interval is a finite set of consecutive integers, it follows that $|I|$ is well-defined. Throughout the paper, we assume that the hypergraph $H$ has $n$ vertices and $m$ hyperedges.
 
\noindent
Let $H = (\mathcal{V}, \mathcal{E})$ be a hypergraph and $C : \mathcal{V} \rightarrow \{0, 1, 2, \ldots, k\}$ be a function, which we refer to as a {\em vertex colouring function}. $C$ is a CF colouring \cite{CPLGARSS2014,CS2012} of $H$ using $k$ colours if for every hyperedge $e \in \mathcal{E}$ there exists a non-zero colour $j \in [k]$ such that $|e \cap C^{-1}(j)| = 1$. 
If vertex $v \in e$ has been assigned a colour different from the colour of all other vertices in $e$, then we say that $e$ {\em is CF coloured} by $v$. Specifically, note that we use colour $0$ to indicate that a vertex with colour $0$ does not conflict-free colour any hyperedge.

\begin{definition}
\texttt{CFCIntervals}: Given an interval hypergraph $H = (\mathcal{V},\mathcal{I})$, find a CF colouring of $H$ using minimum number of colours.
\end{definition}

\noindent
A \textit{hitting set} of $H =  (\mathcal{V},\mathcal{E})$ is a set $T \subseteq \mathcal{V}$ that has at least one vertex from every hyperedge. If $T$ intersects every hyperedge exactly once, then $T$ is called an \textit{exact hitting set}. We refer to a hypergraph which has an exact hitting set as an \textit{exactly hittable hypergraph}. The following observation shows that a CF colouring using one non-zero colour is equivalent to an exact hitting set. 
\begin{observation} \label{lem:1CFimpliesEHS}
A hypergraph $H = (\mathcal{V},\mathcal{E})$ has a CF colouring using one non-zero colour if and only if $H$ is exactly hittable.
\end{observation}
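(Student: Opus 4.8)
The plan is to exhibit an explicit, mutually inverse correspondence between the two objects, since each side of the equivalence is really a statement about the preimage of a single distinguished colour. First I would prove the forward direction: suppose $H$ admits a CF colouring $C$ using one non-zero colour, so $C : \mathcal{V} \to \{0,1\}$. By the definition of CF colouring, every hyperedge $e$ has some non-zero colour $j$ with $|e \cap C^{-1}(j)| = 1$; as $1$ is the only available non-zero colour, this forces $|e \cap C^{-1}(1)| = 1$ for every $e$. Taking $T = C^{-1}(1)$ then yields a set meeting each hyperedge in exactly one vertex, which is precisely an exact hitting set, so $H$ is exactly hittable.

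For the converse I would start from an exact hitting set $T$ of $H$ and define the colouring $C(v) = 1$ if $v \in T$ and $C(v) = 0$ otherwise. Then $C^{-1}(1) = T$, and the defining property of an exact hitting set, $|e \cap T| = 1$ for every hyperedge $e$, translates verbatim into the CF condition $|e \cap C^{-1}(1)| = 1$. Hence $C$ is a CF colouring of $H$ using the single non-zero colour $1$, completing the equivalence.

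The only point needing care --- and the closest thing to an obstacle --- is observing that restricting to $k = 1$ collapses the existential quantifier ``there exists a non-zero colour $j$'' in the definition of CF colouring down to the unique choice $j = 1$. Once this is noted, the CF condition and the exact-hitting condition become literally the same statement about the set $C^{-1}(1) = T$, and the two constructions above are inverse to each other, so no further work is required.
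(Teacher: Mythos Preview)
Your proof is correct and follows essentially the same approach as the paper: in both directions you identify the set $C^{-1}(1)$ with an exact hitting set $T$, noting that with only one non-zero colour available the CF condition $|e \cap C^{-1}(1)| = 1$ is literally the exact-hitting condition. The paper's argument is identical in content, just more tersely stated.
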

\begin{proof}
 If $H$ is an exactly hittable hypergraph, then we get a natural CF colouring with one non-zero colour by giving the non-zero colour to the vertices in the exact hitting set, and the colour 0 to all other vertices. Similarly, in a CF colouring using one colour, each hyperedge contains exactly one vertex which is assigned the non-zero colour, and these vertices form an exact hitting set. Hence the lemma. \qed
\end{proof}
While the recognition of exactly hittable hypergraphs is well-known to be hard, there are polynomial time recognition algorithms for the special case of exactly hittable interval hypergraphs \cite{Dom2006,NDR2018}.

\begin{theorem}[Theorem 4 in \cite{NDR2018}] \label{thm:IHisEH}
There exists a polynomial time algorithm which decides if an interval hypergraph is  exactly hittable.
\end{theorem}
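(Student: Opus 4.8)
The plan is to reduce the decision problem to checking feasibility of a simple combinatorial selection and to solve that by dynamic programming along the line. Writing $T$ for a candidate exact hitting set and listing its points in increasing order as $t_1 < t_2 < \cdots < t_k$, the requirement that $|T \cap I| = 1$ for every interval $I$ splits into two conditions: (i) every interval contains at least one $t_j$, and (ii) no interval contains two of the chosen points. The first step I would carry out is to prove that (ii) is equivalent to the purely local statement that no interval contains two \emph{consecutive} chosen points $t_j, t_{j+1}$: indeed, if an interval $[a,b]$ contained $t_j$ and $t_{j'}$ with $j' > j$, then, being a set of consecutive integers, it would also contain $t_{j+1}$, so any double hit forces a double hit by a consecutive pair.

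With that reduction in hand, I would run a left-to-right dynamic program whose state is the position of the most recently chosen point. Introducing a sentinel $t_0 = 0$, a transition from a chosen point $j$ to the next chosen point $i > j$ is declared \emph{valid} exactly when (A) there is no interval $[a,b]$ with $a \le j$ and $b \ge i$ (no interval spans the whole gap, which is condition (ii)), and (B) there is no interval $[a,b]$ with $j < a$ and $b < i$ (no interval lies strictly inside the gap and would go unhit, which is condition (i) for intervals trapped between consecutive choices). Both predicates depend only on the pair $(j,i)$ and can be precomputed for all pairs in polynomial time, for instance by tabulating, for each position $p$, the largest right endpoint among intervals with left endpoint at most $p$, together with the innermost interval contained in each gap. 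Setting $\mathrm{dp}[i]$ to true iff some valid transition reaches $i$ from a true state, and accepting iff some true $\mathrm{dp}[i]$ has no interval lying entirely to its right, decides exact hittability; the table has $O(n)$ states with $O(n)$ transitions each, so the whole procedure is polynomial.

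The main obstacle I anticipate is the correctness bookkeeping of the transition, namely verifying that conditions (A) and (B) on consecutive pairs jointly capture the global ``exactly once'' constraint over all $m$ intervals without over- or under-counting. The crucial point to nail down is that every interval is charged to exactly one gap, the one containing its unique hit, and that the boundary intervals (those entirely to the left of $t_1$ or right of $t_k$) are handled correctly by the sentinel and the acceptance test. A cleaner but less explicit alternative, which I would mention as a cross-check, is to observe that the incidence matrix of an interval hypergraph has the consecutive-ones property and is therefore totally unimodular; hence the polytope $\{x : Ax = \mathbf{1},\, 0 \le x \le 1\}$ has only integral vertices, so the hypergraph is exactly hittable iff this linear program is feasible, which is decidable in polynomial time and yields a $0/1$ solution, that is an exact hitting set, whenever it is nonempty.
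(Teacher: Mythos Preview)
The paper does not give its own proof of this statement: it is quoted verbatim as Theorem~4 of \cite{NDR2018} and used as a black box (see the sentence immediately preceding the theorem and the citations to \cite{Dom2006,NDR2018}). So there is nothing in the paper to compare your argument to, and any correct self-contained argument already goes beyond what the paper supplies.

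Your proposal is correct. The key combinatorial reduction---that ``no interval contains two chosen points'' is equivalent to ``no interval contains two \emph{consecutive} chosen points''---is exactly the observation that makes the left-to-right DP sound, and your transition predicates (A) and (B) together with the sentinel $t_0=0$ and the right-boundary acceptance test account for every interval exactly once. The alternative you mention is also valid and arguably cleaner: the vertex--hyperedge incidence matrix of an interval hypergraph has consecutive ones in each row and is therefore totally unimodular, so the polytope $\{x: Ax=\mathbf{1},\ 0\le x\le 1\}$ is integral, and exact hittability reduces to LP feasibility. Either route yields a polynomial-time decision procedure; the DP is more constructive and elementary, while the TU argument is shorter and immediately returns an integral witness whenever one exists.
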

\textbf{Perfect Graphs: }A simple graph $G$ is \textit{perfect} if the chromatic number,  denoted by $\chi(G')$,
 of every induced subgraph $G'$ of $G$ equals the clique number, denoted by $\omega(G')$, of $G'$. A \textit{berge graph} is a simple graph that has neither an odd hole nor an odd anti-hole as an induced subgraph \cite{Berge1985,Chudnovsky2005,Chudnovsky2006,Gol2004}. Recall that an odd hole is an induced cycle of length at least 5 and an odd anti-hole is the complement of an odd hole. 
\begin{theorem}[Strong Perfect Graph Theorem (Theorem 1.2 in \cite{Chudnovsky2006})] \label{thm:SPGT}
A graph is perfect if and only if it is Berge. 
\end{theorem}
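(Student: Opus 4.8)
The statement is the Strong Perfect Graph Theorem of Chudnovsky, Robertson, Seymour and Thomas, which the paper quotes rather than reproves; any self-contained argument would have to recover their celebrated structure theorem, so the plan below is necessarily a high-level reconstruction rather than a short proof. The forward (easy) direction, that every perfect graph is Berge, I would establish first and directly. Perfection is a hereditary property, so it suffices to exhibit that the forbidden configurations are themselves imperfect: an odd hole $C_{2k+1}$ has $\omega = 2$ but $\chi = 3$, and by the Weak Perfect Graph Theorem (or a direct count) its complement, an odd antihole, also satisfies $\chi(G') > \omega(G')$. Since a graph containing an induced odd hole or odd antihole would inherit that imperfect induced subgraph, no perfect graph can contain one; hence perfect implies Berge.

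The converse, Berge implies perfect, is the hard content. I would argue by contradiction through a minimal counterexample: suppose $G$ is Berge but imperfect and minimal, in the sense that every proper induced subgraph is perfect (a \emph{minimally imperfect} graph). The strategy is structural. First I would prove the decomposition theorem for Berge graphs: every Berge graph is either \emph{basic} --- a bipartite graph, the line graph of a bipartite graph, a double split graph, or the complement of one of these --- or it admits one of a short list of structural decompositions, namely a $2$-join (or a $2$-join in the complement) or a balanced skew partition. The second ingredient is that each basic class is perfect (these are classical facts), and the third is that none of the listed decompositions can occur in a minimally imperfect graph, because each decomposition, suitably exploited, lets one assemble a colouring of $G$ from colourings of smaller perfect pieces. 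Combining the three yields that a minimally imperfect Berge graph can be neither basic nor decomposable, contradicting the decomposition theorem, so no such graph exists.

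The overwhelming obstacle is the decomposition theorem itself: proving that \emph{every} Berge graph falls into the basic classes or admits one of the permitted cuts. This is where the full weight of the Chudnovsky--Robertson--Seymour--Thomas programme lies --- a long case analysis organised around the presence or absence of particular induced subgraphs (notably prisms and other stretched configurations), together with the delicate handling of the skew-partition case, which was historically the most resistant and which required real care to rule out in a minimally imperfect graph. Given that the original proof runs to roughly $150$ pages, my realistic plan is to cite it exactly as the paper does; a genuinely new route --- for instance a shorter combinatorial certificate for the decomposition, or an argument sidestepping the skew-partition analysis --- would itself be a major research contribution rather than a proof sketch.
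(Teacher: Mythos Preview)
Your proposal is correct and, in fact, matches the paper's treatment exactly: the paper does not prove Theorem~\ref{thm:SPGT} at all but simply cites it from \cite{Chudnovsky2006}, and you recognise this explicitly. Your high-level sketch of the Chudnovsky--Robertson--Seymour--Thomas argument (easy direction via imperfection of odd holes and antiholes; hard direction via the decomposition theorem, perfection of basic classes, and the fact that no decomposition survives in a minimally imperfect graph) is an accurate summary of the original proof, and your assessment that reproducing it in full is infeasible here is sound.
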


The \textit{independence number} of a simple graph $G$ is the size of a maximum independent set of $G$. We denote the independence number of graph $G$ by $\alpha(G)$.

\subsection{Our Results}
\noindent
Our main result is a polynomial time algorithm for the problem of \texttt{CFCIntervals} (defined in Preliminaries). We construct from a given hypergraph two kinds of simple graphs, namely \textit{conflict graphs} in Section \ref{sec:CFCviaIndepGraphs} and \textit{co-occurrence graphs} in Section \ref{sec:cfccog}. In the case of conflict graphs, we give a reduction from the CF colouring problem in hypergraphs to the maximum independent set problem in conflict graphs. In the case of co-occurrence graphs, we present a relation between CF colouring number of $H$ and the chromatic number of its co-occurrence graphs. In general, the maximum independent set problem and the proper colouring problem are NP-hard in simple graphs. However, we prove important structural properties of these graphs when the underlying hypergraphs are interval hypergraphs. We use these properties to eventually arrive at an efficient solution for \texttt{CFCIntervals}. 

First, we present a reduction from CF colouring problem in hypergraphs to the maximum independent set problem in \textit{conflict graphs}. The definition of conflict graphs and the proof of this reduction are given in Section \ref{sec:CFCviaIndepGraphs}. For a hypergraph $H = (\mathcal{V},\mathcal{E})$, and $0 < k \leq n$, the associated conflict graph denoted by $G_k(H)$, has the following relationship with $H$.
\begin{theorem} \label{thm:IndepEqCFC}
Let $H$ be a hypergraph with $m$ hyperedges and $k$ be a positive integer, and  $G_k(H)$ denote the conflict graph of $H$. Let $k_{min}$ be the smallest $k$ for which the independence number of $G_k(H)$ is $m$.  Then, $\chi_{cf}(H) = k_{min}$.
Further, $G_k(H)$ can be constructed in polynomial time and thus the CF colouring problem on hypergraphs can be reduced in polynomial time to the maximum independent set problem in conflict graphs. 
\end{theorem}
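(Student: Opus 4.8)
Since the formal definition of the conflict graph $G_k(H)$ is deferred to Section~\ref{sec:CFCviaIndepGraphs}, I will first fix the natural object that makes the statement true and then prove the equivalence against it. The plan is to take the vertices of $G_k(H)$ to be triples $(e,v,c)$ with $e\in\mathcal{E}$, $v\in e$, and $c\in[k]$, each encoding the \emph{commitment} ``hyperedge $e$ is CF coloured by $v$, which is the unique vertex of colour $c$ inside $e$.'' Two triples are made adjacent exactly when these commitments cannot coexist in a single colouring: (a) all triples sharing the same hyperedge $e$ are pairwise adjacent; (b) for $e_1\neq e_2$, the triples $(e_1,v_1,c_1)$ and $(e_2,v_2,c_2)$ are adjacent iff either $v_1=v_2$ with $c_1\neq c_2$ (one vertex forced into two colours), or $c_1=c_2$, $v_1\neq v_2$, and $v_1\in e_2$ or $v_2\in e_1$ (a like-coloured vertex intruding into a hyperedge that claims a unique occurrence of that colour elsewhere). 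The overall strategy is to show $\alpha(G_k(H))=m$ if and only if $H$ admits a CF colouring with $k$ colours; combined with the obvious monotonicity (any $k$-colouring yields a $(k{+}1)$-colouring, so the property persists for all larger $k$), this forces $k_{min}=\chi_{cf}(H)$. A clean preliminary step is that clause~(a) makes the triples of each fixed hyperedge a clique, so every independent set contains at most one triple per hyperedge and hence $\alpha(G_k(H))\le m$ for every $k$; thus $\alpha=m$ is equivalent to selecting exactly one consistent witness per hyperedge.

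For the forward direction I would start from a CF colouring $C$ using $k$ colours and, for each hyperedge $e$, invoke the CF property to pick a nonzero colour $j_e$ together with its unique witness vertex $v_e\in e$, forming $S=\{(e,v_e,j_e): e\in\mathcal{E}\}$. I then verify $S$ is independent by checking both adjacency clauses against the single global colouring $C$: clause~(a) never triggers across distinct hyperedges because $C(v_{e_1})$ is a single well-defined colour; and if two selected triples share a colour $c$ and one witness $v_{e_1}$ lies in the other hyperedge $e_2$, then $v_{e_1}$ would be a second $c$-coloured vertex of $e_2$, contradicting the uniqueness of $v_{e_2}$ unless $v_{e_1}=v_{e_2}$. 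Hence $S$ is an independent set of size $m$ and $\alpha(G_k(H))=m$.

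For the reverse direction I take an independent set $S$ of size $m$; by the clique structure it contains precisely one triple $(e,v_e,c_e)$ per hyperedge. I define $C$ by colouring each witness vertex $v_e$ with $c_e$ and every remaining vertex with $0$. Well-definedness follows from clause~(a): if a vertex $u$ witnessed two hyperedges with different colours, those two triples would be adjacent, contradicting independence, so the colour assigned to each witness is unambiguous. To confirm $C$ is CF, I argue that each $e$ is CF coloured by $v_e$: suppose some $u\in e$, $u\neq v_e$, also received colour $c_e$; then $u$ must itself be a witness $(e',u,c_e)\in S$ (non-witnesses carry colour $0$), and since $c_e=c_e$, $u\neq v_e$, and $u\in e$, clause~(b) makes $(e,v_e,c_e)$ and $(e',u,c_e)$ adjacent, again contradicting independence. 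Therefore $v_e$ is the unique $c_e$-coloured vertex of $e$, $C$ is a valid CF colouring with at most $k$ colours, and $\chi_{cf}(H)\le k$.

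Finally, the polynomial-time claim follows by bounding $|V(G_k(H))|\le m\cdot n\cdot k\le mn^2$ (as $k\le n$) and observing that each adjacency is decidable in $O(n)$ time by testing membership $v_1\in e_2$ and $v_2\in e_1$, so the whole graph is built in polynomial time; the maximum independent set instance then encodes \texttt{CFCIntervals} for the corresponding $k$. I expect the genuine difficulty to lie not in either implication individually but in pinning down clause~(b) precisely: the condition must capture exactly the cross-hyperedge colour collisions that break uniqueness while allowing two hyperedges to legitimately reuse the same colour at disjoint witnesses, and it must do so without inadvertently forbidding consistent colourings. Getting this adjacency definition right is what makes the ``exactly $m$'' characterization tight, and verifying both directions against it is the crux of the argument.
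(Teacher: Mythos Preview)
Your proposal is correct and essentially identical to the paper's approach: you have reconstructed exactly the paper's conflict graph (your clauses (a), (b)(i), (b)(ii) are precisely the paper's $E_{edge}$, $E_{vertex}$, $E_{colour}$), and your two directions are the content of the paper's Lemmas~\ref{lem:MaxIndIsCFC} and~\ref{lem:CFCIsMaxInd}. One cosmetic slip to fix: in both directions you invoke ``clause~(a)'' for the case of the same vertex appearing with two different colours, but by your own labeling that is the first sub-case of clause~(b), not clause~(a); the arguments themselves are fine once the references are corrected.
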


\noindent
Next, we present a characterization of the CF colouring number of a hypergraph $H$ in terms of the chromatic number of co-occurrence graphs of $H$. We prove this characterization in Section \ref{sec:cfccog}.  

\begin{theorem} \label{thm:Co-occChar}
Let $H = (\mathcal{V},\mathcal{E})$ be a hypergraph.  
Let $\chi_{min}(H)$ be the minimum chromatic number over all possible co-occurrence graphs of $H$. Then, $ \chi_{cf}(H) = \chi_{min}(H)$.
\end{theorem}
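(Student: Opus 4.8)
The plan is to establish the two inequalities $\chi_{cf}(H)\le\chi_{min}(H)$ and $\chi_{min}(H)\le\chi_{cf}(H)$ separately, using the observation that both a CF colouring and a proper colouring of a co-occurrence graph encode the same combinatorial datum: a choice, for each hyperedge $e$, of a vertex $\sigma(e)\in e$ that is to receive a colour occurring nowhere else in $e$. I would therefore first fix the co-occurrence graph of $H$ as a graph $G_\sigma$ parameterized by such a selection $\sigma$: its vertices are the selected vertices $\{\sigma(e): e\in\mathcal{E}\}$, and two selected vertices are adjacent exactly when one of them lies in a hyperedge whose selected vertex is the other, so that the uniqueness requirement forces them to receive distinct colours. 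Under this reading the non-zero colours of a CF colouring correspond to the colours on $G_\sigma$, while colour $0$ corresponds to the vertices selected for no hyperedge.

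For the inequality $\chi_{cf}(H)\le\chi_{min}(H)$, I would show that \emph{every} co-occurrence graph $G_\sigma$ converts a proper colouring into a CF colouring with the same number of colours. Given a proper colouring $\phi$ of $G_\sigma$ with $\chi(G_\sigma)$ colours, define $C(v)=\phi(v)$ on selected vertices and $C(v)=0$ elsewhere. For each hyperedge $e$, the vertex $\sigma(e)$ carries a non-zero colour, and any other $u\in e$ is either unselected (colour $0$) or selected and, by the adjacency rule applied with the edge $e$, adjacent to $\sigma(e)$, hence $\phi(u)\ne\phi(\sigma(e))$; thus $\sigma(e)$'s colour is unique in $e$ and $e$ is CF coloured. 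This yields $\chi_{cf}(H)\le\chi(G_\sigma)$ for every $\sigma$, and minimising over $\sigma$ gives $\chi_{cf}(H)\le\chi_{min}(H)$.

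For the reverse inequality $\chi_{min}(H)\le\chi_{cf}(H)$, I would run the correspondence backwards, which is where the minimisation over co-occurrence graphs is exploited: it suffices to exhibit \emph{one} graph whose chromatic number is at most $\chi_{cf}(H)$. Starting from an optimal CF colouring $C$ with $\chi_{cf}(H)$ non-zero colours, for each hyperedge $e$ I would let $\sigma(e)$ be a vertex whose non-zero colour occurs exactly once in $e$, which exists by the definition of a CF colouring, and form $G_\sigma$. Colouring each selected vertex $v$ by $C(v)$ is then a proper colouring of $G_\sigma$: if selected vertices $u,w$ are adjacent because, say, $u$ lies in the hyperedge $f$ with $w=\sigma(f)$, then $w$ is by construction the unique vertex of colour $C(w)$ in $f$, so $C(u)\ne C(w)$. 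Hence $\chi(G_\sigma)\le\chi_{cf}(H)$ and therefore $\chi_{min}(H)\le\chi_{cf}(H)$.

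The step I expect to be the crux is pinning down the adjacency rule of the co-occurrence graph so that it captures the uniqueness constraint \emph{exactly}, so that the two translations above are mutually inverse and the colour counts match with no slack. In particular I would have to be careful that colour $0$ is consistently identified with the unselected vertices (so that a proper colouring of $G_\sigma$ and the induced CF colouring use the same number of non-zero colours), that a vertex serving as the selected witness of several hyperedges causes no conflict (it is a single graph vertex receiving a single colour), and that every hyperedge admits a legitimate selected vertex. These bookkeeping points, rather than any deep structural fact, are what make the equivalence hold on the nose; note that no special property of $H$ (such as being an interval hypergraph) is used, consistent with the theorem being stated for an arbitrary hypergraph.
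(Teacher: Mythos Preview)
Your proposal is correct and follows essentially the same approach as the paper's proof: both directions are handled by translating between a proper colouring of a co-occurrence graph $\Gamma_t$ (your $G_\sigma$) and a CF colouring of $H$, using colour $0$ for non-selected vertices in one direction and extracting a representative function from an optimal CF colouring in the other. Your write-up is in fact slightly more explicit than the paper's in verifying that the restriction of an optimal CF colouring is a proper colouring of the associated co-occurrence graph, but the underlying argument is identical.
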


\noindent
In order to use the above characterization to obtain an efficient solution for \texttt{CFCIntervals}, there are two difficulties:
\begin{enumerate}
\item Finding the chromatic number of a simple graph is known to be $NP$-complete \cite{Garey1979}.
\item Finding the co-occurrence graph with the smallest chromatic number involves a search among exponentially many co-occurrence graphs.
\end{enumerate}

\noindent
It is now well known due to the result by Gr{\"o}tschel et al. that the chromatic number of a perfect graph can be found in polynomial time \cite{grotschel2012}. In the theorem below, we show that that each co-occurrence graph of an interval hypergraph is perfect. 
\begin{theorem} \label{thm:Co-occPerf}
The co-occurrence graphs of interval hypergraphs are perfect.
\end{theorem}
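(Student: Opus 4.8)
The plan is to prove perfectness by way of the Strong Perfect Graph Theorem (Theorem~\ref{thm:SPGT}) rather than by comparing $\chi(G')$ and $\omega(G')$ on every induced subgraph directly. That is, for a co-occurrence graph $G$ of an interval hypergraph $H=(\mathcal{V},\mathcal{I})$, I would show that $G$ contains neither an induced odd hole nor an induced odd antihole, and then invoke Theorem~\ref{thm:SPGT} to conclude that $G$ is perfect. The motivation for this route is that each vertex of $G$ corresponds to an interval $I=\{l(I),\dots,r(I)\}$, so the adjacency rule defining ``co-occurrence'' can be rewritten purely as a condition on the relative positions of the endpoint pairs $(l(\cdot),r(\cdot))$; the linear order of $[n]$ then gives exactly the kind of control over cyclic adjacency patterns that the Berge characterization demands. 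Note that interval structure is essential here---for arbitrary hypergraphs the construction underlying Theorem~\ref{thm:Co-occChar} need not yield a perfect graph---so the endpoint order must be used in a genuine way.

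First I would make the adjacency relation explicit by reading off the definition from Section~\ref{sec:cfccog} and restating it as a system of inequalities among the $l(\cdot)$ and $r(\cdot)$ values. The key preparatory step is a \emph{transitivity (or separation) lemma}: if intervals $I$ and $K$ both lie in the co-occurrence relation with an interval $J$ whose endpoints are ``between'' those of $I$ and $K$, then $I$ and $K$ are themselves forced into (or, dually, out of) the relation. This is the device that turns long induced cycles into chords. With it, the odd hole case should be routine: given a putative induced cycle $v_1 v_2 \cdots v_{2k+1} v_1$ with $k\ge 2$, I would sort the associated intervals by left endpoint; since consecutive vertices are adjacent and non-consecutive vertices are non-adjacent, traversing the cycle forces a monotone drift of the endpoints, and the transitivity lemma then either produces a forbidden chord between two non-consecutive intervals or makes the cyclic return to $v_1$ impossible.

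The hard part will be the odd antihole case. In an antihole every vertex is \emph{non}-adjacent to only two others, so the endpoint information attached to each interval is diffused across the whole configuration and the single left-to-right sweep that settles holes is no longer conclusive. I would attack this in one of two ways: either (i) establish the analogue of the transitivity lemma for the \emph{complementary} relation, so that the hole argument can simply be rerun on $\overline{G}$, or (ii) exploit the fact that $\overline{C_5}=C_5$ is already excluded by the odd-hole analysis, while every odd antihole of length at least $7$ contains an induced $C_4$ whose four intervals' endpoints would have to satisfy four mutually incompatible inequalities. Proving the correct complement-side transitivity statement---and checking that the $C_4$ obstruction is genuinely forced by the co-occurrence rule together with the exactly-hittable viewpoint of Observation~\ref{lem:1CFimpliesEHS}---is the delicate step I expect to absorb most of the work. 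Once both forbidden families are ruled out, Theorem~\ref{thm:SPGT} yields perfectness, which in turn (with the algorithm of Gr{\"o}tschel et al.~\cite{grotschel2012}) makes $\chi_{min}(H)$, and hence $\chi_{cf}(H)$ via Theorem~\ref{thm:Co-occChar}, polynomial-time computable.
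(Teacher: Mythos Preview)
Your high-level plan---use the Strong Perfect Graph Theorem and rule out odd holes and odd antiholes---is exactly what the paper does. But the concrete execution rests on a misreading of the definition of $\Gamma_t$, and this would derail the whole argument.

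The vertices of $\Gamma_t$ are \emph{points} in $\mathcal{V}\subseteq[n]$ (namely the image $R$ of the representative function $t$), not intervals. The edge rule is: $u,v\in R$ are adjacent iff some interval $I\in\mathcal{I}$ contains both $u$ and $v$ and has $t(I)\in\{u,v\}$. So there is no ``endpoint pair $(l(\cdot),r(\cdot))$'' attached to a vertex, no well-defined interval associated to a vertex (a single point can be the representative of many intervals), and no meaningful ``sort the associated intervals by left endpoint''. Your proposed transitivity/separation lemma, stated as a relation among interval endpoints, does not correspond to the actual adjacency in $\Gamma_t$, and Observation~\ref{lem:1CFimpliesEHS} plays no role here.

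What the paper actually exploits is the natural linear order on the \emph{points} of $R$. For holes (in fact any induced cycle of length $\geq 5$), it picks the rightmost point $p_i$ on the cycle, examines the interval $I$ witnessing the edge $(p_{i-1},p_i)$, argues that $t(I)=p_i$ (otherwise $(p_{i-1},p_{i+1})$ is a chord), then shows $p_{i+2}<p_{i-1}$ and uses the interval witnessing $(p_{i+1},p_{i+2})$ to force a chord at $p_{i-1}$. For antiholes, it uses that every vertex of $\overline{C_r}$ has degree $r-3$: hence no witnessing interval can cover more than $r-2$ of the points, which forces $q_1,q_2,q_{r-1},q_r$ (in left-to-right order) to induce a $C_4$ in the complement---impossible inside a cycle of length $\geq 5$. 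Once you correct the object you are working with, these are the arguments to aim for; your ``$C_4$ obstruction'' idea for antiholes is in the right spirit, but it must be carried out on points, not on intervals.
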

This theorem has been proved in Section \ref{sec:Co-occConflictGraphsArePerfect}. It follows that the chromatic number of a co-occurrence graph of an interval hypergraph can be obtained in polynomial time. 

To address the second difficulty, we use a Linear Programming (LP) formulation and design a polynomial time separation oracle for the conflict graph when $k=1$.  This is detailed in Section \ref{sec:SepOracle}. The separation oracle uses fact that the conflict graph of a hypergraph for $k=1$ is perfect.  This perfectness property is stated in Theorem \ref{thm:ResConfGraphIsPerfect}. From the solution of the LP, we arrive at a co-occurrence graph with the smallest chromatic number among all co-occurrence graphs of the given hypergraph. Once such a co-occurrence graph is obtained, an optimal CF colouring of the given hypergraph is a direct result from Theorem \ref{thm:Co-occChar}.  

\begin{theorem} \label{thm:ResConfGraphIsPerfect}
Let $H = (\mathcal{V},\mathcal{I})$ be an interval hypergraph such that there are at least 3 pairwise disjoint intervals in $\mathcal{I}$. Then, the conflict graph $G_1(H)$ is perfect.
\end{theorem}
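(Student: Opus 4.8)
The plan is to invoke the Strong Perfect Graph Theorem (Theorem \ref{thm:SPGT}) and show that $G_1(H)$ is Berge, i.e. that it contains neither an induced odd cycle of length at least $5$ (an odd hole) nor an odd anti-hole. Recall that for $k=1$ a conflict-free colouring is, by Observation \ref{lem:1CFimpliesEHS}, exactly an exact hitting set, so the vertices of $G_1(H)$ are the incidences $(I,p)$ with $p\in I$, and two incidences conflict precisely when assigning the non-zero colour to both points is inconsistent with each of the two intervals being hit exactly once; for intervals this amounts to $(I,p)\sim(I',p')$ iff $p\neq p'$ and ($p\in I'$ or $p'\in I$). Note this makes the incidences of each interval a clique, in agreement with the bound $\alpha(G_1(H))\le m$ implicit in Theorem \ref{thm:IndepEqCFC}.

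The engine of the whole argument is a single convexity observation: if $p<p'<p''$ are the points of three incidences and the outer two are adjacent, then the middle one is adjacent to one of them. Indeed, adjacency of the outer two gives $p\in I''$ or $p''\in I$; since in either case two points of a common interval straddle $p'$, convexity of the interval forces $p'$ into it, yielding the stated adjacency. I would first use this to dispose of the case in which the points on a putative odd hole are pairwise distinct: ordering the incidences by point value, the observation is exactly the umbrella condition, so $G_1(H)$ restricted to any incidence set with distinct points is a co-comparability graph, and since co-comparability graphs are perfect they contain no odd hole. Consequently every induced odd hole of $G_1(H)$ must use a repeated point.

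The next step is to eliminate repeated points. If two non-adjacent cycle vertices share a point $p$ and some third cycle vertex $(I_t,q)$ has $p\in I_t$, then $(I_t,q)$ is adjacent to both, producing a chord unless the three are forced into a short local configuration; the convexity observation lets me collapse all such coincidences into a bounded list of patterns around the shared point, which I then rule out by hand. I expect the three-pairwise-disjoint-intervals hypothesis to enter precisely here (and in the anti-hole analysis): three pairwise disjoint intervals give three pairwise non-adjacent incidences, which pins down the global independence structure and supplies the ``spread'' needed to exclude the residual degenerate patterns. Determining exactly which small configurations survive, and checking that the hypothesis kills each, is the fiddly but routine part.

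The genuine obstacle is the odd anti-hole of length at least $7$ (length $5$ is already covered, since $\overline{C_5}=C_5$). The convexity observation does \emph{not} transfer to the complement under the point ordering---one checks directly that $\overline{G_1(H)}$ need not admit an umbrella-free ordering---so anti-holes cannot be handled by the same co-comparability shortcut. My plan is to analyse an induced $\overline{C_{2\ell+1}}$ via its complement $C_{2\ell+1}$ in $\overline{G_1(H)}$, extract an extremal incidence (largest right endpoint, smallest left endpoint) whose long interval, together with the convexity observation, is $G_1$-adjacent to all but its two complement-neighbours, and then combine this with the disjoint-interval incidences to force a forbidden chord. Making this extremal/Helly-type argument close uniformly for all $\ell\geq 3$ is where I expect the real work to lie.
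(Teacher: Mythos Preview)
Your adjacency description of $G_1$ and the convexity observation are correct, and the conclusion that any induced subgraph on incidences with pairwise distinct point coordinates is co-comparability is valid; this actually excludes odd anti-holes on such sets as well, not only odd holes, since co-comparability graphs are perfect. But the proposal has real gaps. The repeated-point case for odd holes is asserted to reduce to ``a bounded list of patterns \ldots\ ruled out by hand'' with no argument given; since nodes sharing a point are non-adjacent yet incomparable under your point order, this is precisely where the co-comparability structure breaks, and that case carries genuine content. For anti-holes you offer only an ``extremal/Helly-type'' plan with no concrete mechanism for producing a chord uniformly in $\ell\geq 3$.

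The paper takes a very different and much shorter route via a minimal-counterexample argument. Take $H$ with $G_1(H)$ imperfect minimizing $\mu(H)=\sum_{I\in\mathcal I}|I|$, and let $F$ be an odd hole or odd anti-hole in $G_1(H)$. If some endpoint node $(I,r(I))$ (respectively $(I,l(I))$) is absent from $F$, replacing $I$ by $I\setminus\{r(I)\}$ yields a smaller counterexample; so by minimality both $(I,l(I))$ and $(I,r(I))$ lie in $F$ for every interval $I$. If $F$ is an odd hole: the endpoint nodes come in pairs, one pair per interval, so parity forces some interval to contribute a third node $(I,q)$ to $F$, and then $(I,l(I)),(I,q),(I,r(I))$ is a triangle in $F$, impossible in $C_j$ with $j\geq 5$. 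If $F$ is an odd anti-hole: the three pairwise disjoint intervals contribute $(I_1,l(I_1)),(I_2,l(I_2)),(I_3,l(I_3))$ to $F$, and these are pairwise non-adjacent---exactly your own observation---so $\alpha(F)\geq 3$, contradicting $\alpha(\overline{C_j})=2$. Thus the disjointness hypothesis enters only in the anti-hole case, and in a single line. You had already isolated the key fact (three pairwise non-adjacent incidences); the missing idea is the minimality trick that forces those incidences to lie inside $F$.
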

The above theorem is proved in Section \ref{sec:Co-occConflictGraphsArePerfect}.
If the LP returns a fractional solution, then this solution is appropriately rounded to obtain a feasible integer solution. This integer solution corresponds to a subset of nodes of the conflict graph. From this subset of nodes, a co-occurrence graph is constructed. We show that an optimal CF colouring of an interval hypergraph can be obtained from a proper colouring of the resulting co-occurrence graph.
\begin{theorem} \label{thm:CFCIntHypPolyTime}
\texttt{CFCIntervals} can be solved in polynomial time.
\end{theorem}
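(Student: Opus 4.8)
The plan is to assemble the structural results of the preceding sections into a single polynomial-time pipeline and to argue that each stage runs in polynomial time while the composition is correct. The starting point is Theorem~\ref{thm:Co-occChar}, which reduces \texttt{CFCIntervals} to the task of exhibiting a co-occurrence graph of $H$ whose chromatic number equals $\chi_{min}(H)$, together with a proper colouring of that graph: from such a colouring an optimal CF colouring is read off directly. By Theorem~\ref{thm:Co-occPerf} every co-occurrence graph of an interval hypergraph is perfect, so once a candidate co-occurrence graph is in hand, its chromatic number and an optimal proper colouring are computable in polynomial time by the algorithm of Gr{\"o}tschel et al.~\cite{grotschel2012}. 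Hence the only genuine difficulty, and the step I expect to be the main obstacle, is the second one flagged in the introduction: locating a \emph{minimum}-chromatic co-occurrence graph without enumerating the exponentially many candidates.

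To overcome this I would first dispose of the degenerate cases separately. If $\mathcal{I}$ contains fewer than three pairwise disjoint intervals, then by the Helly property of intervals the structure of $H$ is simple enough that $\chi_{cf}(H)$ is a small constant computable directly; in particular, when all intervals pairwise intersect they share a common point, $H$ is exactly hittable, and $\chi_{cf}(H)=1$ by Observation~\ref{lem:1CFimpliesEHS}. This reduction lets me assume the hypothesis of Theorem~\ref{thm:ResConfGraphIsPerfect}, so that the conflict graph $G_1(H)$ is perfect. With $G_1(H)$ perfect, the plan is to formulate a linear program whose feasible region encodes the selection of a co-occurrence graph and whose optimum encodes $\chi_{min}(H)=\chi_{cf}(H)$, and to solve it with the ellipsoid method. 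The crux is a separation oracle that runs in polynomial time: here I would exploit the perfectness of $G_1(H)$, since computing a maximum-weight independent set in a perfect graph is polynomial by~\cite{grotschel2012}, and this is precisely the primitive the oracle needs in order either to certify feasibility or to return a violated inequality (Theorem~\ref{thm:ResConfGraphIsPerfect} supplies the perfectness this relies on). Solving the LP therefore takes polynomial time.

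Finally, I would convert the possibly fractional LP optimum into a combinatorial object. The fractional solution is rounded to a feasible integral solution, which is interpreted as a subset of the nodes of the conflict graph $G_1(H)$; from this subset a co-occurrence graph of $H$ is constructed. Since this co-occurrence graph is perfect by Theorem~\ref{thm:Co-occPerf}, it is properly coloured with $\chi_{min}(H)$ colours in polynomial time, and Theorem~\ref{thm:Co-occChar} guarantees that the induced vertex colouring is an optimal CF colouring of $H$. Chaining the polynomial bounds of the four stages then yields the claimed polynomial-time algorithm.

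The delicate point, and the place where the argument must be most careful, is the rounding in this last stage: one must show that rounding preserves optimality, i.e. that the rounded integral solution still corresponds to a co-occurrence graph attaining the minimum chromatic number, rather than merely to a feasible but suboptimal one. This is exactly where the integrality afforded by the perfectness of $G_1(H)$ is expected to do the work, and I anticipate that verifying this implication — together with checking that the constructed co-occurrence graph is legitimately a co-occurrence graph of $H$ in the sense required by Theorem~\ref{thm:Co-occChar} — will be the technically heaviest part of the proof.
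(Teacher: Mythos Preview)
Your plan matches the paper's pipeline almost stage for stage: split off the case of fewer than three pairwise disjoint intervals (the paper does this via Lemma~\ref{lem:2disjInt2CFC}, which gives $\chi_{cf}(H)\le 2$ using a clique cover of the interval graph), and in the main case run an LP over $G_1(H)$ solved by the ellipsoid method with a separation oracle that exploits Theorem~\ref{thm:ResConfGraphIsPerfect}, then build a co-occurrence graph from the integral solution and colour it optimally via Theorem~\ref{thm:Co-occPerf}. Two small calibrations: the paper's LP is a \emph{feasibility} system for each fixed $q$ (equalities for Type~1 cliques, $\le q$ for Type~2 cliques) with $q_{min}$ found by linear search, not a single optimisation LP; and the oracle primitive is maximum-weight \emph{clique} in $G_1$ (to detect a violated Type~2 inequality), though of course this is equivalent to maximum-weight independent set in the complement, which is also perfect.

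The one place where your expectation diverges from what actually happens is the rounding. You anticipate that ``the integrality afforded by the perfectness of $G_1(H)$'' will do the work, but the constraint system here is not the stable-set polytope of $G_1$ (Type~1 cliques are forced to exactly~$1$, Type~2 cliques bounded by $q$), so no off-the-shelf integrality theorem applies. The paper instead gives an explicit, interval-specific rounding procedure (\texttt{RoundingAlgo}): repeatedly take the longest interval with smallest left endpoint and, for every interval containing its right endpoint $r$, shift the mass on $r$ one step left to $r{-}1$. Feasibility is preserved (Lemma~\ref{lem:feasibleAfterRounding}) by a direct case analysis on how any maximal Type~2 clique can sit relative to $r$ and $r{-}1$; the argument uses only the linear order of points and the choice of the longest leftmost interval, not perfectness. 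So your identification of the rounding as the crux is right, but the mechanism you should be looking for is combinatorial rather than polyhedral.
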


\noindent
Finally, we study the relationship between CF Colouring problem and the Exact Hitting Set (EHS) problem.   The EHS problem is the NP-hard dual of the Exact Cover problem \cite{Knuth2000}, which is one of the earliest known NP-hard problems. For a hypergraph $H = (\mathcal{V},\mathcal{E})$, we observe that a CF colouring of $H$ using at most $c$ non-zero colours partitions $\mathcal{E}(H)$ into $c$ hypergraphs such that each hypergraph has an exact hitting set.  This is a very natural observation and has been formally stated in  Lemma \ref{lem:CFimpliesEHS}. A simple proof of this observation has been given in the appendix.  The interesting question is whether a hypergraph which can be partitioned into $c$ hypergraphs, each of which has an exact hitting set, can be CF coloured with at most $c$ non-zero colours.  We answer this question affirmatively in two cases: when $c=1$ (in Lemma \ref{lem:1CFimpliesEHS}) and in the case when the hypergraph is an interval hypergraph (in Theorem \ref{thm:EHS-CF}). Our results are based on the characterization of the CF colouring number presented in Theorem \ref{thm:Co-occChar}. The theorem below has been proved in Section \ref{sec:cfeqehs}.
\begin{theorem} \label{thm:EHS-CF}
For an interval hypergraph $H = (\mathcal{V},\mathcal{E})$, there exists a partition of $\mathcal{E}$ into $k$ parts $\{\mathcal{E}_1, \ldots, \mathcal{E}_k\}$ such that for each $1 \leq i \leq k$, $H_i = (\mathcal{V}, \mathcal{E}_i)$ has an exact hitting set if and only if there exists a CF colouring of $H$ with $k$ non-zero colours.
\end{theorem}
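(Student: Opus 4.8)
The plan is to prove Theorem~\ref{thm:EHS-CF} as a biconditional, handling each direction separately and leaning on the characterization of $\chi_{cf}(H)$ furnished by Theorem~\ref{thm:Co-occChar}. For the forward direction, suppose $\mathcal{E}$ partitions into $k$ parts $\{\mathcal{E}_1,\ldots,\mathcal{E}_k\}$ where each $H_i = (\mathcal{V},\mathcal{E}_i)$ is exactly hittable. By Observation~\ref{lem:1CFimpliesEHS}, each $H_i$ admits a CF colouring with a single non-zero colour; equivalently, each $\mathcal{E}_i$ has an exact hitting set $T_i$. The natural idea is to assign colour $i$ to the vertices of $T_i$ and colour $0$ to everything else, so that every hyperedge $e \in \mathcal{E}_i$ meets $T_i$ in exactly one vertex and is thus CF coloured by the colour $i$. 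The subtlety I expect here is that a single vertex may lie in several of the $T_i$, and a hyperedge in $\mathcal{E}_i$ may accidentally pick up vertices coloured $j \neq i$ from a different exact hitting set; one must verify that $e$ still sees colour $i$ uniquely. I would argue that since $e \in \mathcal{E}_i$ contains exactly one vertex of $T_i$, that vertex carries a colour that singles it out within $e$ regardless of the other colours present, so the colouring is genuinely conflict-free with $k$ non-zero colours.

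For the reverse direction, assume $H$ has a CF colouring $C$ with $k$ non-zero colours. The goal is to extract a partition of $\mathcal{E}$ into $k$ exactly hittable sub-hypergraphs. For each hyperedge $e$, the definition of CF colouring guarantees a non-zero colour $j_e$ that appears on exactly one vertex of $e$; I would form $\mathcal{E}_j = \{e : j_e = j\}$ (breaking ties by, say, choosing the smallest such colour, so the partition is well-defined). For each $j$, let $T_j = C^{-1}(j)$ be the set of vertices coloured $j$. By construction every $e \in \mathcal{E}_j$ contains exactly one vertex of colour $j$, so $T_j$ hits each hyperedge of $\mathcal{E}_j$ exactly once, making $H_j = (\mathcal{V},\mathcal{E}_j)$ exactly hittable. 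This yields the desired partition into $k$ parts directly, and notably this direction does not even require the interval structure.

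Given that both directions above appear to go through for arbitrary hypergraphs, the natural worry is where the hypothesis that $H$ is an \emph{interval} hypergraph is actually needed. I expect the interval assumption enters not in the logical equivalence of the two statements but in certifying that the minimum such $k$ equals $\chi_{cf}(H)$ in a constructive, meaningful way --- that is, the theorem as phrased is an existence biconditional, and the interval hypothesis guarantees (via Theorem~\ref{thm:Co-occChar} and the perfectness results of Theorems~\ref{thm:Co-occPerf} and~\ref{thm:ResConfGraphIsPerfect}) that the minimum partition number is realised by a co-occurrence graph of smallest chromatic number and hence is polynomial-time computable. The main obstacle I anticipate, therefore, is reconciling the clean set-theoretic argument with the role of the interval structure: I would double-check whether the reverse direction truly holds for general hypergraphs, and if so, present the forward direction carefully (the tie-breaking and the ``unique colour survives extra colours'' point) while invoking the co-occurrence characterization to tie the partition number to $\chi_{cf}(H)$. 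If instead the authors intend the interval hypothesis to be essential, the likely reason is a more delicate construction connecting exact hitting sets to the specific co-occurrence graph structure, and I would route the forward direction through the co-occurrence graph of $H$ rather than through Observation~\ref{lem:1CFimpliesEHS} alone.
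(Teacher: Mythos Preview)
Your reverse direction (CF colouring $\Rightarrow$ partition into exactly hittable parts) is correct and coincides with the paper's Lemma~\ref{lem:CFimpliesEHS}; that direction indeed holds for arbitrary hypergraphs.

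The forward direction, however, has a genuine gap. The sentence ``since $e \in \mathcal{E}_i$ contains exactly one vertex of $T_i$, that vertex carries a colour that singles it out within $e$ regardless of the other colours present'' is false. Take the interval hypergraph on $\{1,2,3,4\}$ with $e_1=\{1,2\}$, $e_2=\{3,4\}$, $e_3=\{1,2,3\}$, partitioned as $\mathcal{E}_1=\{e_1,e_2\}$ with exact hitting set $T_1=\{2,3\}$ and $\mathcal{E}_2=\{e_3\}$ with $T_2=\{2\}$. Vertex $2$ lies in $T_1\cap T_2$; if you resolve the overlap by giving $2$ colour $1$, then $e_3$ receives the colour pattern $(0,1,1)$ and is not CF coloured---the unique vertex of $e_3\cap T_2$ has a colour shared by another vertex of $e_3$. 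A mirror-image instance defeats the opposite tie-breaking convention, and by taking the disjoint union of the two one obtains an interval hypergraph where no uniform overlap-resolution rule produces a valid CF colouring from the given $T_i$. Deciding how to colour the overlap vertices is essentially the original problem, so the direct construction cannot close the argument.

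This is precisely where the interval hypothesis enters, and the paper's proof does \emph{not} attempt the direct colouring. Given the partition with exact hitting sets $h_1,\ldots,h_k$, the paper sets $t(I)$ to be the unique point of $I\cap h_i$ for $I\in\mathcal{E}_i$, builds the co-occurrence graph $\Gamma_t$, and shows via Lemmas~\ref{lem:kPartskClique} and~\ref{lem:kClique} that $\omega(\Gamma_t)\le k$. Perfectness of $\Gamma_t$ (Theorem~\ref{thm:Co-occPerf}, which genuinely uses that the hyperedges are intervals) then gives $\chi(\Gamma_t)=\omega(\Gamma_t)\le k$, and Theorem~\ref{thm:Co-occChar} converts a proper colouring of $\Gamma_t$ into a CF colouring of $H$ with at most $k$ non-zero colours. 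Your final-paragraph instinct to route the forward direction through the co-occurrence graph is exactly right; it is mandatory, not a fallback.
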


\section{CF colouring,  Chromatic Number and Independence Number}
\subsection{CF Colouring via Conflict Graphs}  \label{sec:CFCviaIndepGraphs}
Given a hypergraph $H = (\mathcal{V},\mathcal{E})$, and a positive integer $k$,  we define the conflict graph $G_k(H) = (V,E)$. Wherever $H$ is implied, we use $G_k$ to denote $G_k(H)$. 
$G_k$ is a simple graph that encodes the constraints to be respected by a CF colouring of $H$ with at most $k$ non-zero colours.  The vertex set of $G_k$ is $V = \big\{(e,v,c) \mid e \in \mathcal{E}, v \in e, 1 \leq c \leq k \big\}$.   The elements of $V$ are referred to as {\em nodes} and the word \textit{vertex} refers to a vertex of a hypergraph.
 In a node $(e,v,c)$,  we refer to $e$ as the hyperedge coordinate, $v$ as the vertex coordinate and $c$ as the colour coordinate. 
Conceptually, a node $(e,v,c)$ in $G_k$ represents the  logical proposition that hyperedge $e$ is CF coloured by vertex $v \in e$ which has been assigned the colour $c$.  
The edge set of $G_k$ is a subset of pairs of $V$.  $E(G_k)$ is defined such that each edge encodes a constraint to be respected by any CF colouring of $H$. The edge set of $G_k$ is $E = E_{vertex} \cup E_{edge} \cup E_{colour}$, where $E_{vertex}, E_{edge}, $ and $E_{colour}$ are defined as follows:
\begin{enumerate}
\item $E_{vertex} = \big\{ \big((e,v,c),(g,v,d)\big) \mid  1 \leq c \neq d \leq k \big\}$.  Note that the set of nodes in $G_k$ whose vertex coordinate is $v$ forms a complete $k$-partite graph.
\item $E_{edge} = \big\{ \big((e,v,c),(e,u,d)\big) \mid 1 \leq c,d \leq k \big\}$. For each hyperedge $e$ in $H$, the nodes in $G_k$ with $e$ as the hyperedge coordinate form a clique.  
\item $E_{colour} = \big\{ \big((e,v,c),(g,u,c)\big) \mid \{v,u\} \subseteq e \text{ or } \{v,u\} \subseteq g, u \neq v, 1 \leq c \leq k \big\}$.  For each colour $c$, an edge $((e,v,c), (g,u,c))$ is created for those $u \neq v$ such that either both $u$ and $v$ are in $e$ or both are in $g$.
\end{enumerate}

There is a natural correspondence between the set of independent sets of $G_k$ and the set of  vertex colourings of $\mathcal{V}(H)$ with $k$ colours. \\

\noindent
\textbf{Independent sets in $G_k$ and vertex colourings of $\mathcal{V}(H)$:}
Given an independent set $\mathcal{A}$ of $G_k$, consider the following vertex colouring function $f: \mathcal{V}\rightarrow \{0,1,2,\ldots,k\}$ defined as follows. For each $v \in \mathcal{V}$,
\[   
f(v) = 
     \begin{cases}
       c, \text{ if }\exists e \text{ such that } (e,v,c) \in \mathcal{A}\\
       0, \text{ otherwise} \ 
     \end{cases}
\]
 Next, given  a vertex colouring function $f'$ of  $\mathcal{V}(H)$, we define a subset of nodes in $G_k$ as follows.  
  Consider the set of hyperedges in $\mathcal{E}(H)$, denoted by $\mathcal{C}$, that are CF coloured by $f'$.  For each hyperedge $e \in \mathcal{C}$, let  $(e,v, c)$ be 
an arbitrary node such that $e$ is CF coloured by $v$  and $f'(v) = c$. Let $\mathcal{A}$ denote the set of these nodes.   
We call $\mathcal{A}$ the \textit{conflict-free set} obtained from $f'$. 
In the following two lemmas, we formally prove the connection between a CF colouring of $H$ and a maximum independent set in $G_k$ by proving properties of vertex colouring functions and  conflict-free sets.

\noindent
{\bf Notation:} In the rest of this section $G_k$ refers to a conflict graph and the corresponding $H$ and $k$ will be clear from the context.
Similarly, the conflict graph associated with a vertex colouring function $f$ obtained from an independent set and the vertex colouring function associated with a conflict-free set in the hypergraph will also be clear from the context.
\begin{lemma} \label{lem:MaxIndIsCFC}
For a positive integer $k$ and a hypergraph $H$, let $G_k$ be the conflict graph of $H$. Suppose $\alpha(G_k) = m$. Then the vertex colouring function $f$ obtained from any maximum independent set in $G_k$ is a CF colouring of $H$ with at most $k$ non-zero colours.
\end{lemma}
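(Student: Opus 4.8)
The plan is to show directly that the colouring $f$ arising from a maximum independent set $\mathcal{A}$ (with $|\mathcal{A}| = \alpha(G_k) = m$) assigns to each hyperedge a uniquely-coloured vertex of some non-zero colour. First I would verify that $f$ is well defined: if two nodes $(e,v,c)$ and $(g,v,d)$ with $c \neq d$ both lay in $\mathcal{A}$, then the edge $\big((e,v,c),(g,v,d)\big) \in E_{vertex}$ would contradict the independence of $\mathcal{A}$, so every vertex $v$ receives at most one non-zero colour and $f(v)$ is unambiguous.

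Next I would pin down the structure forced by $|\mathcal{A}| = m$. For each hyperedge $e$, the nodes with hyperedge coordinate $e$ form a clique via $E_{edge}$, so $\mathcal{A}$ contains at most one such node; summing over the $m$ hyperedges gives $|\mathcal{A}| \le m$. Since $|\mathcal{A}| = m$, a counting argument forces $\mathcal{A}$ to contain exactly one node, say $(e, v_e, c_e)$, for each hyperedge $e$. By the definition of $f$, this yields $f(v_e) = c_e$ with $v_e \in e$.

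The core step is to show that $v_e$ is the unique vertex of $e$ coloured $c_e$, so that $e$ is CF coloured. Suppose some $u \in e$ with $u \neq v_e$ also satisfied $f(u) = c_e$; then some node $(g, u, c_e)$ lies in $\mathcal{A}$. Since $\{v_e, u\} \subseteq e$, $u \neq v_e$, and both nodes carry the colour $c_e$, the pair $\big((e, v_e, c_e),(g,u,c_e)\big)$ belongs to $E_{colour}$, contradicting the independence of $\mathcal{A}$. Hence $|e \cap f^{-1}(c_e)| = 1$ and $e$ is CF coloured by $v_e$. As this holds for every hyperedge and $f$ takes values in $\{0,1,\ldots,k\}$, $f$ is a CF colouring of $H$ with at most $k$ non-zero colours.

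I expect the main obstacle to be this last step---correctly leveraging $E_{colour}$ to rule out a colour collision inside a hyperedge---since the well-definedness and the counting argument are routine once the three edge families are unwound. I would take care to confirm that an $E_{colour}$ edge is present exactly when two distinct vertices of a common hyperedge share a colour, which is precisely the situation a collision within $e$ would produce.
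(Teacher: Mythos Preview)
Your proposal is correct and follows essentially the same route as the paper's proof: well-definedness of $f$ via $E_{vertex}$, the one-node-per-hyperedge counting argument via the $E_{edge}$ cliques, and ruling out a second vertex of colour $c_e$ inside $e$ via $E_{colour}$. The only cosmetic difference is that the paper phrases the last step as showing $f(u)\neq c_e$ for every $u\in e\setminus\{v_e\}$ by splitting into the cases $f(u)=0$ and $f(u)\neq 0$, whereas you argue by contradiction assuming $f(u)=c_e$; the underlying use of the $E_{colour}$ edge is identical.
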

\begin{proof}
Let $\mathcal{A}$ be a maximum independent set of $G_k$. For a hyperedge $e$, the nodes of $G_k$ with $e$ as the hyperedge coordinate form a clique.  Therefore, all the $m$ nodes in $\mathcal{A}$ have distinct hyperedge coordinates.  Let $(e,v_e, c_e)$ $\in \mathcal{A}$ be the node corresponding to hyperedge $e$.  From definition of $E_{vertex}$, we observe  that the vertex colouring function $f$ defined based on $\mathcal{A}$ is indeed a function on $\mathcal{V}(H)$ such that the range of $f$ is a subset of $\{0, 1, \ldots, k\}$.  This follows from the fact that if $(e, v_e, c_e)$ and $(g, v_g, c_g)$ are two nodes in $\mathcal{A}$ such that $v_e = v_g$, then $c_e = c_g$.  Indeed, if $c_e$ and $c_g$ were distinct colours, then by the definition of $E_{vertex}$, there should have been an edge $\big((e, v_e, c_e), (g, v_g, c_g)\big)$ which does not exist as the two nodes are in the independent set $\mathcal{A}$.  We next show that $f$ is a CF colouring of $H$ : Let $(e,v_e,c_e) \in \mathcal{A}$. Let $u \neq v_e$ be another vertex in $e$.  We now show that $f(u)$ is different from $c_e$: If $f(u)=0$, then clearly $f(u)$ and $f(v_e)$ are different, since $c_e$ is a non-zero value.  In the case when $f(u)$ is non-zero, then by the definition of $f$ there is a node $(g,v_g,c_g)$ in $\mathcal{A}$ such that  the vertex coordinate $v_g$ is the vertex $u$.   Since $u$ and $v_e$ are distinct elements of the hyperedge $e$, and because $(e,v_e,c_e)$ and $(g,v_g,c_g)$ are nodes in the independent set  $\mathcal{A}$ and thus non-adjacent in $G_k$, it follows that $c_g \neq c_e$. The reason is that had they been equal, by the definition of $E_{color}$, $\big((e,v_e,c_e),(g,v_g,c_g)\big)$ would have been an edge in $E_{color}$, which would contradict the fact that $(e,v_e,c_e)$ and $(g,v_g,c_g)$ are nodes in the independent set $\mathcal{A}$. Therefore, $f(u)=c_g$ and $f(v_e) = c_e$ are different. It follows that for any $u \neq v_e$, $f(u) \neq f(v_e)$.  Hence $f$ is a CF colouring of $H$.  Since the range of $f$ has at most $k$ non-zero colours, it follows that $f$ is a CF colouring of $H$ using at most $k$ non-zero colours. 
\qed
\end{proof}

We next show that a conflict-free set in $G_k$ obtained from a CF colouring using at most $k$ non-zero colours is a maximum independent set in $G_k$.

\begin{lemma}\label{lem:CFCIsMaxInd}
Let $f'$ be a CF colouring of $H$ that uses $k$ non-zero colours. Then, a conflict-free set $\mathcal{A}$ obtained from $f'$ is a maximum independent set of the conflict graph $G_k$.
\end{lemma}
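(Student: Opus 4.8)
The plan is to establish two facts: that $\mathcal{A}$ is an independent set of $G_k$, and that its cardinality equals $\alpha(G_k)$. First I would record that since $f'$ is a CF colouring of $H$, \emph{every} hyperedge is CF coloured, so the set $\mathcal{C}$ of CF-coloured hyperedges is all of $\mathcal{E}$, and $\mathcal{A}$ therefore contains exactly one node per hyperedge; hence $|\mathcal{A}| = m$. For the size bound I would use the definition of $E_{edge}$: the nodes sharing any fixed hyperedge coordinate form a clique, so every independent set of $G_k$ contains at most one node per hyperedge, giving $\alpha(G_k) \le m$. Consequently it suffices to prove that $\mathcal{A}$ is independent, since then $|\mathcal{A}| = m = \alpha(G_k)$ forces $\mathcal{A}$ to be a maximum independent set.

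To prove independence, I would take two distinct nodes $(e,v_e,c_e)$ and $(g,v_g,c_g)$ of $\mathcal{A}$ and rule out each of the three edge families in turn. Because $\mathcal{A}$ has exactly one node per hyperedge, the two hyperedge coordinates differ, $e \neq g$, so the nodes cannot be joined by an edge of $E_{edge}$. They cannot be joined by an edge of $E_{vertex}$ either: such an edge requires a common vertex coordinate $v_e = v_g$ together with $c_e \neq c_g$, but in $\mathcal{A}$ each node $(e,v,c)$ satisfies $c = f'(v)$, so $v_e = v_g$ immediately yields $c_e = f'(v_e) = f'(v_g) = c_g$.

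The one remaining case, an edge of $E_{colour}$, is where the conflict-free property is actually needed. Such an edge would require $c_e = c_g$, $v_e \neq v_g$, and either $\{v_e,v_g\} \subseteq e$ or $\{v_e,v_g\} \subseteq g$. Assume $\{v_e,v_g\} \subseteq e$, the other case being symmetric. By construction of $\mathcal{A}$, the hyperedge $e$ is CF coloured by $v_e$, which by definition means $v_e$ is the \emph{unique} vertex of $e$ carrying colour $c_e$ under $f'$. But $v_g \in e$, $v_g \neq v_e$, and $f'(v_g) = c_g = c_e$, contradicting uniqueness. Hence no $E_{colour}$ edge joins the two nodes, and $\mathcal{A}$ is independent.

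I do not expect a genuine obstacle here; the argument is a disciplined case analysis over the three edge families $E_{vertex}$, $E_{edge}$, $E_{colour}$, together with the clique bound $\alpha(G_k) \le m$. The only step requiring care is the $E_{colour}$ case, where one must invoke the precise meaning of ``$e$ is CF coloured by $v_e$'' — that $v_e$ is the unique vertex of $e$ with its colour under $f'$ — to reach the contradiction.
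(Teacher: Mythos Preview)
Your proposal is correct and follows essentially the same approach as the paper: establish $|\mathcal{A}|=m$, bound $\alpha(G_k)\le m$ via the per-hyperedge cliques from $E_{edge}$, and verify independence by a case analysis over the edge families. The only cosmetic difference is that the paper organizes the independence check by whether $v_e=v_g$ or $v_e\neq v_g$ (and then sub-cases on containment) rather than by edge type, but the underlying steps and the use of the CF property in the $E_{colour}$ case are identical.
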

\begin{proof}
Since $f'$ is a CF colouring of $H$, it follows from definition of $\mathcal{A}$ that for each $e \in \mathcal{E}(H)$, there is exactly one node in $\mathcal{A}$ for which the hyperedge coordinate is $e$. Therefore, $|\mathcal{A}| = m$.  Let $\mathcal{A} = \{(e, v_e, c_e) \mid e \in \mathcal{E}(H)\}$.  
We now show that $\mathcal{A}$ is an independent set.  Let $(e,v_e,c_e)$ and $(g,v_g,c_g)$ be two nodes in $\mathcal{A}$.  By construction of $\mathcal{A}$, it follows that $e \neq g$.  We consider two cases: when $v_e = v_g$, then $c_e = c_g$ since they are colours given by $C$ to $v_e = v_g$.  From the definitions of  $E_{vertex}$ and $E_{edge}$, it follows that $(e,v_e,c_e)$ and $(g,v_g,c_g)$ are non-adjacent.  When $v_e \neq v_g$, we have two sub-cases here: when $v_e$ and $v_g$ do not both belong to $e$ and do not both belong to $g$, then by definition of $E(G_k)$, it follows that  $(e,v_e,c_e)$ and $(g,v_g,c_g)$ are non-adjacent.  In the sub-case when both $v_e$ and $v_g$ belong to either $e$ or $g$, then since $f'$ is a conflict-colouring of $H$ in which $e$ and $g$ are CF coloured by $v_e$ and $v_g$, respectively, it follows that $c_e \neq c_g$.  
From the definition of $E_{colour}$, it follows that $(e,v_e,c_e)$ and $(g,v_g,c_g)$ are non-adjacent.  Therefore, $\mathcal{A}$ is an independent set in $G_k$.  Since the nodes of $G_k$ are partitioned into at most $m$ cliques (one clique for each hyperedge $e \in \mathcal{E}(H)$), it follows that the maximum independent set in $G_k$ has at most $m$ nodes. Therefore, $\mathcal{A}$ is a maximum independent set of size $m$ in $G_k$.  \qed
\end{proof}

\begin{proof}[of Theorem \ref{thm:IndepEqCFC}]
The proof of this theorem follows from Lemmas \ref{lem:MaxIndIsCFC} and \ref{lem:CFCIsMaxInd}. Let $\mathcal{A}$ be a maximum independent set of $G_{k_{min}}$. Given $|\mathcal{A}| = m$. Let $f$ be the vertex colouring function obtained from $\mathcal{A}$. It follows from Lemma \ref{lem:MaxIndIsCFC} that $f$ is a CF colouring of $H$ with at most $k_{min}$ colours. Hence $\chi_{cf}(H) \leq k_{min}$. \\
To show the other direction, consider any optimal CF colouring $C_{opt}$ of $H$. Let $k_{opt}$ be the number of colours used by $C_{opt}$. That is, $\chi_{cf}(H) = k_{opt}$. Let $\mathcal{A}$ be the conflict-free set obtained from $C_{opt}$. It follows from Lemma \ref{lem:CFCIsMaxInd} that $\mathcal{A}$ is a maximum independent set of graph $G_{k_{opt}}$ and it is of size $m$.   Therefore, $k_{min} \leq k_{opt} = \chi_{cf}(H)$.  Therefore, it follows that $\chi_{cf}(H) = k_{min}$. \\
Also for each hypergraph $H$ and $k > 0$, it follows from the description of $G_k$ that it can be constructed in polynomial time.  Further, since $\chi_{cf}(H) = k_{min}$ it follows that we have a polynomial time reduction from the CF colouring problem in hypergraphs  to the maximum independent set problem in conflict graphs. 
\qed
\end{proof}

\subsection{Co-occurrence Graphs and Proper Colouring} \label{sec:cfccog}
Given a hypergraph $H=(\mathcal{V}, \mathcal{E})$, we show that the CF colouring number of $H$ is the minimum chromatic number over a set of simple graphs called \textit{co-occurrence graphs}.  A co-occurrence graph of $H$ is defined based on a \textit{representative function} $t$ (defined below), and is denoted by $\Gamma_t$.   Note that for a conflict graph $G_k(H)$, the subscript is a positive integer, whereas in the case of the co-occurrence graph $\Gamma_t$, the subscript $t$ is a representative function.
For a CF colouring function $C$ defined on $\mathcal{V}$,
let $t:\mathcal{E} \rightarrow \mathcal{V}$ be a function such that $e \in \mathcal{E}$ is CF coloured by $t(e)$. We refer to $t$ as a representative function obtained from the colouring $C$.  Further, given a function $t:\mathcal{E} \rightarrow \mathcal{V}$ such that for each edge $e$, $t(e) \in e$, we define a CF colouring of $H$ for which $t$ is the representative function as follows. 
Let $R \subseteq \mathcal{V}$ denote the image of $\mathcal{E}$ under the function $t$. The vertex set of the co-occurrence graph $\Gamma_t$ is  $R$, and for $u,v \in R$,  $(u,v)$ is an edge in  $\Gamma_t$ if and only if  for some $e \in \mathcal{E}$, $u \in e$ and $v \in e$ and $t(e)$ is either $u$ or $v$.  
An example is given in Figure \ref{fig:coOccur}. 
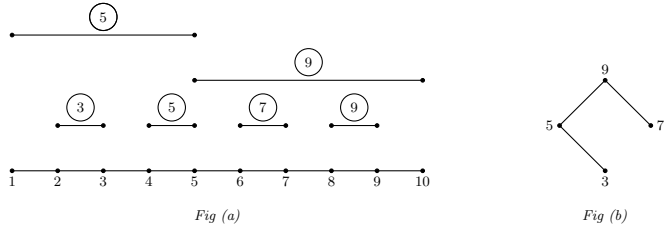
\begin{figure}[ht]
\centering
\begin{tikzpicture}[scale=0.6, every node/.style={scale=0.6}]
\draw(1,3)--(5,3); 
\draw(5,2)--(10,2); 
\draw(2,1)--(3,1); 
\draw(4,1)--(5,1); 
\draw(6,1)--(7,1); 
\draw(8,1)--(9,1); 

\draw[radius=3mm, color=black](3,3.4) circle;
\draw(3,3.4) node{$5$};
\draw[radius=3mm, color=black](7.5,2.4) circle;
\draw(7.5,2.4) node{$9$};
\draw[radius=3mm, color=black](2.5,1.4) circle;
\draw(2.5,1.4) node{$3$};
\draw[radius=3mm, color=black](4.5,1.4) circle;
\draw(4.5,1.4) node{$5$};
\draw[radius=3mm, color=black](6.5,1.4) circle;
\draw(6.5,1.4) node{$7$};
\draw[radius=3mm, color=black](8.5,1.4) circle;
\draw(8.5,1.4) node{$9$};

\draw[radius=0.4mm, color=black, fill=black](1,0) circle node[below=0.02cm]{1};
\draw[radius=0.4mm, color=black, fill=black](2,0) circle node[below=0.02cm]{2};
\draw[radius=0.4mm, color=black, fill=black](3,0) circle node[below=0.02cm]{3};
\draw[radius=0.4mm, color=black, fill=black](4,0) circle node[below=0.02cm]{4};
\draw[radius=0.4mm, color=black, fill=black](5,0) circle node[below=0.02cm] {5};
\draw[radius=0.4mm, color=black, fill=black](6,0) circle node[below=0.02cm]{6}; 
\draw[radius=0.4mm, color=black, fill=black](7,0) circle node[below=0.02cm]{7};
\draw[radius=0.4mm, color=black, fill=black](8,0) circle node[below=0.02cm]{8};
\draw[radius=0.4mm, color=black, fill=black](9,0) circle node[below=0.02cm]{9};
\draw[radius=0.4mm, color=black, fill=black](10,0) circle node[below=0.02cm]{10};

\draw[radius=0.4mm, color=black, fill=black](1,3) circle node[below=0.02cm]{};
\draw[radius=0.4mm, color=black, fill=black](5,3) circle node[below=0.02cm]{};
\draw[radius=0.4mm, color=black, fill=black](5,2) circle node[below=0.02cm]{};
\draw[radius=0.4mm, color=black, fill=black](10,2) circle node[below=0.02cm]{};
\draw[radius=0.4mm, color=black, fill=black](2,1) circle node[below=0.02cm]{};
\draw[radius=0.4mm, color=black, fill=black](3,1) circle node[below=0.02cm]{};
\draw[radius=0.4mm, color=black, fill=black](4,1) circle node[below=0.02cm]{};
\draw[radius=0.4mm, color=black, fill=black](5,1) circle node[below=0.02cm]{};
\draw[radius=0.4mm, color=black, fill=black](6,1) circle node[below=0.02cm]{};
\draw[radius=0.4mm, color=black, fill=black](7,1) circle node[below=0.02cm]{};
\draw[radius=0.4mm, color=black, fill=black](8,1) circle node[below=0.02cm]{};
\draw[radius=0.4mm, color=black, fill=black](9,1) circle node[below=0.02cm]{};

\draw[radius=3mm, color=black](3,3.4) circle;

\draw(1,0)--(10,0); 
\draw(5.5,-1) node{\textit{Fig (a)}};
\draw[radius=0.4mm, color=black, fill=black](14,2) circle node[above=0.02cm]{9};
\draw[radius=0.4mm, color=black, fill=black](13,1) circle node[left=0.02cm]{5};
\draw[radius=0.4mm, color=black, fill=black](15,1) circle node[right=0.02cm]{7};
\draw[radius=0.4mm, color=black, fill=black](14,0) circle node[below=0.02cm]{3};

\draw(14,2)--(13,1);
\draw(13,1)--(14,0);
\draw(14,2)--(15,1);
\draw(14,-1) node{\textit{Fig (b)}};
\end{tikzpicture}
\caption{(a) Interval Hypergraph $H = (\mathcal{V},\mathcal{E})$ (b) Co-occurrence graph $\Gamma_t$ of $H$ with $R=\{3,5,7,9\}$, and $t(E)$ for each $E \in \mathcal{E}$ marked as the label for each interval}
\label{fig:coOccur}
\end{figure}%
\noindent
Define $\chi_{min}(H) = \min\limits_{t} \chi(\Gamma_t)$ where $\chi(\Gamma_t)$ is the chromatic number of the co-occurrence graph $\Gamma_t$ and the minimum is taken over all representative functions $t$.  
\begin{proof}[of Theorem \ref{thm:Co-occChar}]
Let $t$ be a representative function such that $\chi(\Gamma_t) = \chi_{min}(H)$.  We extend a proper colouring $C$ of $\Gamma_t$ to a vertex colouring function $C'$ of $\mathcal{V}(H)$ by assigning the colour $0$ to those vertices in $\mathcal{V}(H) \setminus R$.    $C'$ is a CF colouring of $H$ since for each $e \in \mathcal{E}$, the colour assigned to the vertex $t(e)$ by $C'$ is different from the colour assigned to every other vertex in $e$.  The reason for this is as follows: let  $v \in e$ be  a vertex different from $t(e)$. If $C'(v)=0$, then definitely its colour is different from $C'(t(e))$. On the other hand,  if $C'(v)$ is non-zero, then it implies that there is an $e'$ such that $v = t(e')$. Consequently, $v \in V(\Gamma_t)$, and since $v \in e$, $(v,t(e))$ is an edge in $\Gamma_t$ by the definition of $\Gamma_t$.  Further, since $C'$ is obtained from a proper colouring $C$ of $\Gamma_t$ it follows that $C'(v)$ is different from $C'(t(e))$. Thus $\chi_{cf} \leq \chi_{min}(H)$.  We prove that
$\chi_{min}(H) \leq \chi_{cf}(H)$ as follows: since a minimum CF colouring of $H$ gives a representative function $t$ as defined above, it follows that $\chi_{cf}(H) \geq \chi(\Gamma_t) \geq \chi_{min}(H)$.  Therefore, it follows that $\chi_{cf}(H) = \chi_{min}(H)$.  
\qed
\end{proof}
\subsection{Intervals: $G_1$ and Co-occurrence graphs are Perfect } \label{sec:Co-occConflictGraphsArePerfect}

In this section, we prove two perfectness results when the underlying hypergraph is an interval hypergraph. The perfectness of co-occurrence graphs, given in Theorem \ref{thm:Co-occPerf} enables us to find a proper colouring of $\Gamma_t$ in polynomial time. The second perfectness result in Theorem \ref{thm:ResConfGraphIsPerfect} is used to prove Lemma \ref{lem:SepOracPoly}. We first prove Theorem \ref{thm:Co-occPerf}.

\begin{proof}[of Theorem \ref{thm:Co-occPerf}]
We use Theorem \ref{thm:SPGT} to prove this perfectness result. Given an interval hypergraph $H$, let $t$ be a representative function and let $\Gamma_t$ be the resulting co-occurrence graph. We first show that $\Gamma_t$ does not have an induced cycle of length at least 5.  Note that we prove a stronger statement than required by Theorem \ref{thm:SPGT} which requires that there are no induced odd cycles of length at least 5.  Our proof is by contradiction. Assume that $F = \{p_1,p_2 \ldots p_r\}$ is an induced $C_r$-cycle for $r \geq 5$. Let the sequence of nodes in $F$ be $p_1,p_2 \ldots p_r,p_1$. Let $p_i$ be the rightmost point of $F$ on the line.  In what follows, the arithmetic among the indices of $p$ is $mod$ $r$.  
Without loss of generality, let us assume that $p_{i-1} < p_{i+1}$, which are the two neighbours of $p_i$ in $F$.  Therefore, $p_{i-1} < p_{i+1} < p_i$.  Since edge $(p_{i-1},p_i)$ is in $F$, it follows that there exists an interval $I$ for which $t(I) \in \{p_{i-1},p_i\}$. We claim that $t(I)$ is $p_i$: if $t(I)$ is $p_{i-1}$, then $(p_{i-1},p_{i+1})$ is an edge in $\Gamma_t$ by definition.
Therefore, $(p_{i-1},p_{i+1})$ is a chord in $F$, a contradiction to the fact that $F$ is an induced cycle.  Therefore, $t(I) = p_i$.  Further, we claim that the point $p_{i+2} < p_{i-1}$: if $p_{i+2} > p_{i-1}$, then $p_{i+2}$ belongs to the interval $I$ and by the definition of the edges in $\Gamma_t$, $(p_i,p_{i+2})$ is an edge in $\Gamma_t$.  Therefore,  $(p_i,p_{i+2})$ is a chord in $F$. This contradicts the fact that $F$ is an induced cycle. Therefore, $p_{i+2} < p_{i-1}$.   At this point in the proof we have concluded that $p_{i+2} < p_{i-1} < p_{i+1} < p_i$ and $t(I) = p_i$.  Since $(p_{i+1},p_{i+2})$ is an edge in $F$, it follows that there exists an interval $J$ such that both $p_{i+1}$ and $p_{i+2}$ belong to $J$ and $t(J) \in \{p_{i+1},p_{i+2}\}$.  
Since $F$ is an induced cycle of length at least 5, $(p_{i-1},t(J))$ is an edge in $\Gamma_t$ by definition. Therefore,  $(p_{i-1},t(J))$ is a chord in either case,  that is when $t(J) = p_{i+1}$ or $t(J) = p_{i+2}$.  This contradicts the assumption that $F$ is an induced cycle of length at least 5.  Thus, $\Gamma_t$ cannot have an induced cycle of size at least 5.
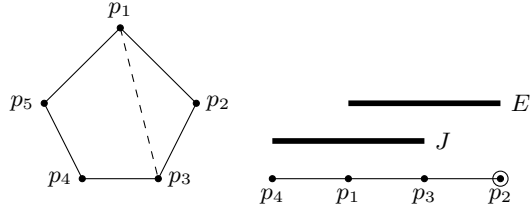
\begin{figure}[ht]
\centering
\begin{tikzpicture}
\draw[radius=0.4mm, color=black, fill=black](4,2) circle node[above=0.02cm]{$p_1$};
\draw[radius=0.4mm, color=black, fill=black](5,1) circle node[right=0.02cm]{$p_2$};
\draw[radius=0.4mm, color=black, fill=black](4.5,0) circle node[right=0.02cm]{$p_3$};
\draw[radius=0.4mm, color=black, fill=black](3.5,0) circle node[left=0.02cm]{$p_4$};
\draw[radius=0.4mm, color=black, fill=black](3,1) circle node[left=0.02cm]{$p_5$};

\draw(4,2)--(5,1);
\draw(5,1)--(4.5,0);
\draw(4.5,0)--(3.5,0);
\draw(3,1)--(3.5,0);
\draw(3,1)--(4,2);
\draw[style=dashed](4,2)--(4.5,0);

\draw[radius=0.4mm, color=black, fill=black](7,0) circle node[below=0.02cm]{$p_1$};
\draw[radius=0.4mm, color=black, fill=black](8,0) circle node[below=0.02cm]{$p_3$};
\draw[radius=0.4mm, color=black, fill=black](9,0) circle node[below=0.02cm]{$p_2$};
\draw(6,0)--(9,0);
\draw[line width=0.75mm](7,1)--(9,1);
\node at (9,1) [right=0.02cm]{$E$};
\draw[radius=1mm](9,0) circle;
\draw[radius=0.4mm, color=black, fill=black](6,0) circle node[below=0.02cm]{$p_4$};
\draw[line width=0.75mm](6,0.5)--(8,0.5);
\node at (8,0.5) [right=0.02cm]{$J$};
\end{tikzpicture}
\caption{Case when $r=5$ and $i=2$}
\end{figure}

\noindent
Next, we show that $\Gamma_t$ does not contain the complement of an induced cycle of length at least 5.
Assume that $F$ is an induced $\overline{C_{r}}$, $r \geq 5$ in $\Gamma_t$. 
\begin{figure}[ht]
\centering
\begin{tikzpicture}[scale=0.8, every node/.style={scale=0.8}]
\draw(4,0)--(4, 1.5);
\draw(4,1.5)--(5.5, 2.5);
\draw(4,0)--(5.5, 2.5);
\draw(5.5,2.5)--(7, 1.5);
\draw(7,1.5)--(7,0);
\draw(5.5,2.5)--(7,0);
\draw(4,1.5)--(5.5,-0.75);
\draw(5.5,-0.75)--(7,1.5);
\draw(4,1.5)--(5.5,1.6);
\draw(5.5,1.6)--(7,1.5);
\draw(4,0)--(5.5,1.6);
\draw(5.5,1.6)--(7,0);
\draw(4,0)--(5.5,-0.75);
\draw(7,0)--(5.5,-0.75);

\draw[radius=0.3mm, color=black, fill=black](4,1.5) circle node[left=0.02cm]{$q_1$};
\draw[radius=0.3mm, color=black, fill=black](4, 0) circle node[below=0.02cm]{$q_2$};
\draw[radius=0.3mm, color=black, fill=black](5.5,2.5) circle node[above=0.02cm]{$q_3$};
\draw[radius=0.3mm, color=black, fill=black](5.5,1.6) circle node[above=0.02cm]{$q_4$};
\draw[radius=0.3mm, color=black, fill=black](5.5,-0.75) circle node[below=0.02cm]{$q_{r-2}$};
\draw[radius=0.3mm, color=black, fill=black](7,1.5) circle node[right=0.02cm]{$q_{r}$};
\draw[radius=0.3mm, color=black, fill=black](7,0) circle node[below=0.02cm]{$q_{r-1}$};

\node at (5.5,-1.5){$(a)$};

\draw[radius=0.2mm, color=black, fill=black](5.5,0.25) circle;
\draw[radius=0.2mm, color=black, fill=black](5.5,0.5) circle;
\draw[radius=0.2mm, color=black, fill=black](5.5,0.75) circle;
\draw[radius=0.2mm, color=black, fill=black](5.5,1) circle;
\draw[radius=0.2mm, color=black, fill=black](5.5,1.25) circle;
\draw[radius=0.2mm, color=black, fill=black](5.5,0) circle;
\draw[radius=0.2mm, color=black, fill=black](5.5,-0.25) circle;

\draw(4,0)--(4.5,0.1);
\draw(4,0)--(4.5,0.2);
\draw(4,0)--(4.5,0.3);

\draw(7,0)--(6.5,0.1);
\draw(7,0)--(6.5,0.2);
\draw(7,0)--(6.5,0.3);

\draw(4,1.5)--(4.5,1.2);
\draw(4,1.5)--(4.5,1.4);
\draw(4,1.5)--(4.5,1.3);

\draw(7,1.5)--(6.5,1.2);
\draw(7,1.5)--(6.5,1.4);
\draw(7,1.5)--(6.5,1.3);

\draw(9,2)--(11,2);
\draw(11,0)--(9,0);
\draw(9,0)--(9,2);
\draw(11,0)--(11,2);

\draw[radius=0.3mm, color=black, fill=black](9,2) circle node[above=0.02cm]{$q_1$};
\draw[radius=0.3mm, color=black, fill=black](11,2) circle node[above=0.02cm]{$q_{r-1}$};
\draw[radius=0.3mm, color=black, fill=black](9,0) circle node[below=0.02cm]{$q_{r}$};
\draw[radius=0.3mm, color=black, fill=black](11,0) circle node[below=0.02cm]{$q_2$};
\node at (10,-1){$(b)$};

\end{tikzpicture}
\caption{(a) Adjacencies of vertices in $\overline{C}$ (b) Induced $C_4$ in complement of $\overline{C}$}%
\label{fig:PerfComple}
\end{figure}
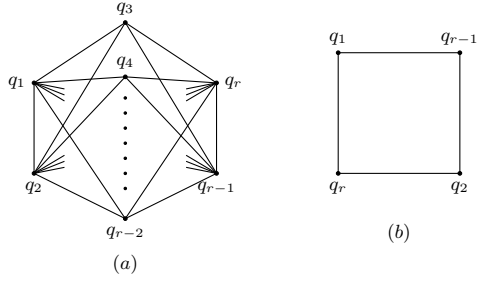%

Let $q_1, q_2, \ldots, q_r$ be the nodes of $F$. Also, let $q_1 < q_2 < \ldots < q_r$ be the left to right ordering of points on the line corresponding to vertices of $F$. Since $deg(q_i) = r-3$ for all $q_i$ in $F$, it follows that no interval $I$, such that $t(I) \in F$, contains more than $r-2$ vertices from $F$. Otherwise, if there exists an interval $I$ such that $t(I) \in F$ contains more than $r-2$ vertices from $F$, then $deg(t(I)) \geq r-2$ in $F$ which is a contradiction. Therefore, there does not exist any interval that contains both $q_1$ and $q_r$. Similarly, there does not exist any interval that contains both $q_1$ and $q_{r-1}$ and any interval that contains both $q_2$ and $q_{r}$. Since $deg(q_1) = r-3$, it follows that $q_1$ must be adjacent to all vertices in $\{q_2,q_3,\ldots,q_{r-2}\}$.
Similarly, $q_r$ must be adjacent to all vertices in $\{q_3,q_4,\ldots,q_{r-1}\}$. Next, we consider the degrees of vertices $q_2$ and $q_{r-1}$ in $F$. Since they are in $F$, $q_2$ is adjacent to $q_1$ and $q_{r-1}$ is adjacent to $q_r$. Now, $q_2$ must be adjacent to $r-4$ more vertices. We show that $q_{2}$ is not adjacent to $q_{r-1}$. Suppose not, that is, if $q_{2}$ is adjacent to $q_{r-1}$, then there exists an interval $I$ that contains both $q_{2}$ and $q_{r-1}$ and $t(I) \in \{q_2, q_{r-1}\}$. Then $t(I)$ is adjacent to all points in the set $\{\{q_2,q_3,\ldots,q_{r-1}\} \setminus t(I)\}$. Thus, by considering the one additional edge incident on $t(I)$ depending on whether $t(I)=q_2$ or $q_{r-1}$, it follows that $deg(t(I)) \geq r-2$, a contradiction to the fact that the degree of each vertex inside $F$ is $r-3$.   Therefore, it follows that  $(q_{2},q_{r-1})$ does not exist in $F$. It follows that in $\overline{F}$, which we know is an induced cycle of length at least 5,   there is an induced cycle $q_1, q_{r-1}, q_2, q_r, q_1$ of length $4$.  This contradicts the structure of an induced cycle of length at least 5.   Hence, we conclude that $\Gamma_t$ does not have an induced cycle of length 5 or more or its complement.  Therefore $\Gamma_t$ is a perfect graph. \qed
\end{proof}%
We now prove the perfectness property of conflict graphs for $k=1$. This property is crucial in our proof of Lemma \ref{lem:SepOracPoly}, where we show that the separation oracle referred by the ellipsoid method runs in polynomial time.
In the proof of the theorem below, $\mu(H)$ denotes the number of vertices in $G_1$. Note that $\mu(H) = \sum_{I \in \mathcal{I}} |I|$. 
\begin{proof}[of Theorem \ref{thm:ResConfGraphIsPerfect}]
By the characterization of perfect graphs in Theorem \ref{thm:SPGT}, we know that for each $p > 1$, induced  odd cycle $C_{2p+1}$ and its complement denoted by $\overline{C_{2p+1}}$ are forbidden induced subgraphs.  We now show that for an interval hypergraph which has at least three disjoint intervals, the graph $G_1$ is perfect.   
Our proof is by starting with the hypothesis that the claim is false and deriving a contradiction.

Let $H = (\mathcal{V},\mathcal{J})$ be an interval hypergraph for which $G_1$ is not perfect, and among all such interval hypergraphs, $H$ minimizes $\mu(H)$. Since $G_1$ is not perfect, let us consider a minimal induced subgraph of $G_1$, denoted by, say $F$ for which $\omega(F) \neq \chi(F)$. 
We claim that for every interval $I \in \mathcal{J}$, both the nodes $(I,l(I))$ and $(I,r(I))$ belong to $F$. The proof of this claim is by contradiction to the fact that $H$ is an interval hypergraph that minimizes $\mu(H)$ and for which $G_1$ is not perfect.  Let $I$ be an interval in $\mathcal{J}$ such that the node $(I,r(I)) \notin V(F)$. Consider the hypergraph $H' = (\mathcal{V},\mathcal{J}')$ where  $\mathcal{J}' = (\mathcal{J} \setminus I) \cup (I \setminus r(I))$.  Let $G_1'$ denote the conflict graph of $H'$. Observe that $V(G_1') = V(G_1) \setminus \{(I, r(I))\}$. Since $(I,r(I)) \notin V(F)$ and $(I,r(I)) \notin V(G_1')$, it follows that $F$ is an induced subgraph of $G_1'$ also.  Therefore by Theorem \ref{thm:SPGT} it follows that $G_1'$ is imperfect.  Further, $\mu(H') < \mu(H)$.   This contradicts the hypothesis that $H$ is the interval hypergraph with minimum $\mu(H)$ for which $G_1$ is imperfect.  Therefore, it follows that for each interval $I \in \mathcal{J}$, $(I,r(I))$ is a node in $F$. An identical argument shows that for each interval $I \in \mathcal{J}$, $(I,l(I))$ is also a node in $F$. Hence it follows that $\forall I \in \mathcal{J}$, both the nodes $(I,l(I))$ and $(I,r(I))$ belong to $F$.  We now consider two exhaustive cases to obtain a contradiction to the known structure of $F$ which we know is either a $C_{2p+1}$ or a $\overline{C_{2p+1}}$ for some $p > 1$.   \\
{\em Case 1- When $F$ is an induced odd cycle $C_{j}, j \geq 5$}: In the preceding argument, we showed that for each interval $I$, both the nodes $(I,l(I))$ and $(I,r(I))$ belong to $F$. This accounts for an even number of distinct nodes in the cycle $C_j$.   Since $C_j$ is an induced odd cycle, it follows that in $C_j$ there is at least one more node $(I,q)$ for which $q$ is different from $r(I)$ and $l(I)$. 
From the definition of $E_{edge}$, we know that the 3 nodes $(I,l(I)), (I,r(I)), (I,q)$ form a $K_3$. This is a contradiction to the fact that an induced cycle of length at least 5 does not have a $K_3$ as an induced subgraph.  Therefore, $F$ is not an induced odd cycle.\\
\noindent
{\em Case 2- When $F$ is the complement of an odd cycle, say $\overline{C_{j}}, j \geq 5$}: Consider 3 pairwise disjoint intervals  $I_1,I_2,I_3$  in $\mathcal{J}$. We have already shown that for each interval $I$, $(I,l(I))$ and $(I,r(I))$ belong to $F$.
It follows that $(I_1,l(I_1))$, $(I_1,r(I_1))$, $(I_2,l(I_2))$, $(I_2,r(I_2))$, $(I_3,l(I_3))$, $(I_3,r(I_3))$ belong to $F=\overline{C_{j}}$. Since $I_1,I_2,I_3$ are pairwise disjoint, it follows from the construction of the conflict graph that for all $1 \leq i \neq j \leq 3$, there is no edge from $(I_i,l(I_i))$ to $(I_j,l(I_j))$ and there is no edge from $(I_i,r(I_i)$ to $(I_j,r(I_j))$. It follows that $(I_1,l(I_1)), (I_2,l(I_2)), (I_3,l(I_3))$ form an independent set. That is, there is an independent set of size at least 3 in $F$. This is a contradiction to the fact that in the complement of any induced cycle of length at least 4, there is no independent set of size greater than 2.   Therefore, $F$ cannot be the complement of an induced odd cycle of length at least 5.\\
Therefore, the assumption of a minimal $H$ for which $G_1$ is not perfect leads to a contradiction to the known structure of graphs which are not perfect. 
 Therefore, our hypothesis that there is a minimal $H$ for which $G_1$ is not perfect is wrong.
Hence, it follows that for an interval hypergraph with at least 3 disjoint intervals, $G_1$ is perfect. 
\qed
\end{proof}
\section{Computing the Optimal Co-occurrence Graph of Interval Hypergraphs using Conflict Graphs}
Throughout this section, we consider conflict graph $G_k(H)$ for $k = 1$. 
Wherever $H$ is implied, we use $G_1$ instead of $G_1(H)$ to denote this special case. Since $k=1$ in $G_1$, the third co-ordinate of every node in $G_1$ is redundant. Hence every node in $G_1$ is a $2$-tuple whose first component is the hyperedge co-ordinate and the second component is the vertex co-ordinate. 

\begin{algorithm}[H]
\caption{CFC-Intervals}
\label{algo:MainAlgo}
\KwIn{Interval hypergraph $H = (\mathcal{V},\mathcal{I})$}
\vspace{3mm}
\If {$\mathcal{I}$ has at least 3 disjoint intervals}{
$B_{opt} = \{\emptyset\}$ \;
$G_1 \leftarrow $ conflict graph of $H$ for $k=1$ \;
$\mathcal{I}' \leftarrow \mathcal{I}$ \;
$q \leftarrow 1$ \;
\While{there is no feasible solution for \texttt{SPAlg}$(\mathcal{B},q)$}{
$q \leftarrow q+1$ \;
}
$q_{min} \leftarrow q$ \;
$B_{opt} \leftarrow $\texttt{SPAlg}$(\mathcal{B},q_{min})$ \;
$B_{optI} \leftarrow $ \texttt{RoundingAlgo}($B_{opt},\mathcal{I}'$) \;
$X^1 \leftarrow \{x_{I,u} \mid x_{I,u} = 1 \text { in solution } B_{optI} \}$ \;
$R \leftarrow \{(I,u) \mid x_{I,u} \in X^1\}$ \; 
Define $\big(t: \mathcal{I} \rightarrow R\big)$ as $t(I) = u \text{ for } (I,u) \in R$ \;
$\Gamma_t \leftarrow$ Co-occurrence graph on $t$ \;
$\chi \leftarrow $ A proper colouring of $\Gamma_t$ \;
\For {each $v \in \mathcal{V}$}{
\If {$v \in R$}{
$\chi_{cf}(v) \leftarrow \chi(v)$ \; 
}
\Else{$\chi_{cf}(v) \leftarrow 0$ \;
}
}}
\Else{
$v_1,v_2 \leftarrow$ points corresponding to a minimum clique cover of $H$ \;
$\chi_{cf}(v_1) \leftarrow 1$ \;
$\chi_{cf}(v_2) \leftarrow 2$ \;
\For {each vertex $v \in \mathcal{V} \setminus \{v_1,v_2\}$}{
$\chi_{cf}(v) \leftarrow 0$ \;
}
}
return $\chi_{cf}$ \;
\end{algorithm}

\begin{observation} \label{obs:NodesOfaVertexAreIndep}
Let $H = (\mathcal{V},\mathcal{I})$ be any hypergraph. For each vertex $v \in \mathcal{V}$, the set of nodes $\{(I,v) \mid I \in \mathcal{I}, v \in I\}$ in $G_1$ forms an independent set.   
\end{observation}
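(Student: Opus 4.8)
The plan is to show directly that $G_1$ contains no edge between any two distinct nodes of the set $S_v := \{(I,v) \mid I \in \mathcal{I},\, v \in I\}$. Since the edge set of a conflict graph is $E_{vertex} \cup E_{edge} \cup E_{colour}$, it suffices to rule out an edge of each of the three types between an arbitrary pair of nodes $(I,v),(J,v) \in S_v$ with $I \neq J$. I would begin by recording that, because $k=1$, the colour coordinate of every node is forced to equal $1$; this lets me treat each node as the pair $(I,v)$ while still reading the edge definitions with the (constant) colour coordinate in mind.

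First I would dispose of $E_{vertex}$: a pair belongs to $E_{vertex}$ only if its two endpoints carry distinct colour coordinates $c \neq d$. As $k=1$ admits only the single colour $1$, no such pair exists, so $E_{vertex}=\emptyset$ and in particular it contributes no edge inside $S_v$. Next, $E_{edge}$ joins two nodes only when they share the same hyperedge coordinate; but $(I,v)$ and $(J,v)$ have $I \neq J$, so they are not adjacent via $E_{edge}$. Finally, membership of a pair in $E_{colour}$ requires the two vertex coordinates to be distinct (the clause $u \neq v$ in its defining condition); the two nodes of $S_v$ share the common vertex coordinate $v$, so this requirement fails and no $E_{colour}$ edge can arise between them.

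Since none of the three edge families places an edge between any two nodes of $S_v$, the set is independent, which establishes the observation. There is essentially no genuine obstacle here beyond the bookkeeping of the three cases; the only point that needs care is recognising that the colour-disagreement clause of $E_{vertex}$ is vacuous precisely when $k=1$, which is exactly where the restriction to $G_1$ (rather than a general $G_k$) is used.
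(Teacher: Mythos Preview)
Your argument is correct: the case analysis over $E_{vertex}$, $E_{edge}$, and $E_{colour}$ is exactly the right way to verify the claim, and each case is handled cleanly. The paper itself states this observation without proof, treating it as immediate from the definition of $G_1$; your write-up simply makes that immediacy explicit.
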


\begin{observation}
The conflict graph $G_1$ has two types of cliques:
\begin{itemize}
\item Set of Type 1 cliques denoted by $\mathcal{Q}_1$ : The set of maximal cliques formed by nodes having the same hyperedge co-ordinate. All edges in this type of clique belong to $E_{edge}$.
\item Set of Type 2 cliques denoted by $\mathcal{Q}_2$ : The maximal cliques in $G_1$ that have at least one edge from $E_{colour}$. 
\end{itemize}
\end{observation}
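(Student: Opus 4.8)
The plan is to establish the claimed dichotomy by unwinding the edge set of $G_1$ and observing that the colour coordinate is inert when $k=1$. First I would note that in the definition of $E_{vertex}$ the two endpoints $(e,v,c)$ and $(g,v,d)$ require $c \neq d$; since $k=1$ forces $c=d=1$, no such edge can exist, so $E_{vertex} = \emptyset$ in $G_1$ (this is precisely the content of Observation~\ref{obs:NodesOfaVertexAreIndep}). Consequently every edge of $G_1$ lies in $E_{edge} \cup E_{colour}$, and, as already noted, each node collapses to a $2$-tuple $(I,v)$. This reduction is the whole engine of the argument: once $E_{vertex}$ is gone, the edges of $G_1$ are sorted into exactly two families, and the classification of maximal cliques will follow from how those two families behave.

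Next I would fix an arbitrary maximal clique $Q$ of $G_1$ and split on whether $Q$ uses any edge of $E_{colour}$. If it does, then $Q$ contains at least one $E_{colour}$ edge and hence belongs to $\mathcal{Q}_2$ by definition. If it does not, then every edge of $Q$ lies in $E_{edge}$; since an $E_{edge}$ edge joins two nodes only when they share the same hyperedge coordinate, pairwise $E_{edge}$-adjacency across $Q$ equates all hyperedge coordinates, so all nodes of $Q$ carry a single common coordinate, say $I$. I would then use maximality to pin $Q$ down exactly: the nodes $\{(I,v) : v \in I\}$ are pairwise joined by $E_{edge}$, so if $Q$ omitted any such node it could be enlarged inside $G_1$ while remaining a clique, contradicting maximality; hence $Q = \{(I,v) : v \in I\}$, a Type~1 clique in $\mathcal{Q}_1$. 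These two cases are exhaustive and mutually exclusive, so every maximal clique of $G_1$ is classified into $\mathcal{Q}_1$ or $\mathcal{Q}_2$.

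The one genuinely delicate point, and the step I expect to need the most care, is the maximality bookkeeping in the $E_{edge}$-only case. A same-hyperedge node set $\{(I,v) : v \in I\}$ is always a clique, but it need not be maximal in $G_1$: a node $(J,w)$ with $J \neq I$ may be adjacent via $E_{colour}$ to all of them, in which case the true maximal clique containing these nodes carries an $E_{colour}$ edge and is therefore correctly sorted into $\mathcal{Q}_2$ rather than $\mathcal{Q}_1$. I would make explicit that $\mathcal{Q}_1$ collects precisely those same-hyperedge node sets that happen to be maximal in $G_1$ (i.e. not absorbable into a larger clique through a cross-hyperedge node), so the labels Type~1 and Type~2 stay well defined and the partition is clean. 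No induction or case analysis on cycle structure is needed; the observation is a direct consequence of $E_{vertex}$ being empty at $k=1$ together with the definitions of $E_{edge}$ and $E_{colour}$.
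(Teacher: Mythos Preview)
The paper supplies no proof of this Observation at all; it is stated purely as a definition of the labels $\mathcal{Q}_1$ and $\mathcal{Q}_2$, which are then used to name the two collections of cliques appearing as LP constraints. There is no argument in the paper to compare your proposal against.

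Your proposal goes beyond the paper and tries to establish that $\mathcal{Q}_1$ and $\mathcal{Q}_2$ form a partition of the maximal cliques. That stronger reading is not what the Observation asserts, and in fact it is false; the flaw sits in your case split. You divide on whether a maximal clique $Q$ contains an edge of $E_{colour}$, and in the negative branch conclude that all edges of $Q$ lie in $E_{edge}$ only. But for $k=1$ every $E_{edge}$ edge is already an $E_{colour}$ edge: the pair $\big((e,v,1),(e,u,1)\big)$ with $u\neq v$ has equal colour coordinates and satisfies $\{u,v\}\subseteq e$, hence it meets the defining condition of $E_{colour}$. Consequently the branch ``$Q$ has no $E_{colour}$ edge'' is essentially empty (it catches only isolated nodes), and every same-hyperedge maximal clique with at least two nodes lies in $\mathcal{Q}_2$ as well as in $\mathcal{Q}_1$. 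The two classes overlap rather than partition. This does not harm the paper's later use of the Observation---the LP simply imposes an equality for each member of $\mathcal{Q}_1$ and an inequality for each member of $\mathcal{Q}_2$, regardless of overlap---but it does mean your ``mutually exclusive'' conclusion and the surrounding discussion about cleanly sorting each maximal clique into exactly one type cannot stand as written.
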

\noindent
\subsection{Representative Function from a Hitting Set of cliques in $G_1$} 
\label{sec:CFCfromEHSofType1Cliques}
Let $S \subseteq V(G_1)$ be an exact hitting set of $\mathcal{Q}_1$ that hits every maximal clique in $\mathcal{Q}_2$ at most $q$ times, for some integer $q>0$. 
In other words, for each maximal clique $Q \in \mathcal{Q}_1$, $|S \cap Q|= 1$  and for each maximal clique $Q \in \mathcal{Q}_2$, $|S \cap Q| \leq q$. In Section \ref{sec:LP}, we obtain such an exact hitting set by way of a linear program. 
Let $q_{min}$ be the smallest value of $q$ for which such an exact hitting set, say $S_{min}$, exists. Note that $|S_{min}| = m$ since there are $m$ maximal cliques in $\mathcal{Q}_1$, each corresponding to an interval. Let the nodes in $S_{min}$ define a mapping $t: \mathcal{I} \rightarrow \mathcal{V}$ as follows: $t(I) = u \text{ if } (I,u) \in S_{min}$. We show that $t$ is a representative function in Lemma \ref{lem:tIsRepAndOmegaGtLThanKMin}. We also show that the size of the maximum clique in the co-occurrence graph $\Gamma_t$, obtained from the representative function $t$, is upperbounded by $q_{min}$. 
\begin{lemma} \label{lem:tIsRepAndOmegaGtLThanKMin}
Let $t:\mathcal{I} \rightarrow \mathcal{V}$ be the function as defined above.  Then $t$ is a representative function obtained from some conflict-free coloring and $\omega(\Gamma_t) \leq q_{min}$. 
\end{lemma}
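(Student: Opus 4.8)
The statement splits into two claims, and I would treat them separately.

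\textbf{That $t$ is a representative function.} This is the routine half. Since $S_{min}$ is an exact hitting set of $\mathcal{Q}_1$, each Type~1 clique---one per interval $I$---contains exactly one node of $S_{min}$, so $t(I)$ is well defined. Every node of $G_1$ has the form $(I,u)$ with $u\in I$, hence $t(I)\in I$ for all $I$. A map $\mathcal{I}\to\mathcal{V}$ with $t(I)\in I$ is precisely the input consumed by the construction in the proof of Theorem~\ref{thm:Co-occChar}: form $\Gamma_t$, properly colour it, and assign colour $0$ to the vertices outside $R$. That construction produces a CF colouring $C'$ under which each interval $I$ is CF coloured by $t(I)$, so $t$ is the representative function obtained from $C'$.

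\textbf{That $\omega(\Gamma_t)\le q_{min}$.} The plan is to lift a maximum clique of $\Gamma_t$ to a clique of the same size inside $G_1$ whose nodes all lie in $S_{min}$, and then argue this clique sits inside a single Type~2 clique. Concretely, let $K=\{v_1<v_2<\cdots<v_s\}$ be a maximum clique of $\Gamma_t$, so $s=\omega(\Gamma_t)$. I would prove, by induction on $|K|$, that there exist intervals $\{I_v:v\in K\}$ with $t(I_v)=v$ (equivalently $(I_v,v)\in S_{min}$) such that $\{(I_v,v):v\in K\}$ is a clique of $G_1$. Granting this, a single interval has a single $t$-value, so distinct vertices of $K$ receive distinct intervals $I_v$; as $E_{vertex}=\emptyset$ for $k=1$, all edges of the lifted clique then lie in $E_{colour}$. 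Hence, whenever $s\ge 2$, extending it to a maximal clique $Q$ of $G_1$ gives $Q\in\mathcal{Q}_2$ with $|S_{min}\cap Q|\ge s$, and the hypothesis that $S_{min}$ meets every clique of $\mathcal{Q}_2$ at most $q_{min}$ times yields $s\le q_{min}$. The degenerate case $s\le 1$ is covered by $q_{min}\ge 1$, which holds because the admissible values of $q$ are required to be positive.

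\textbf{The inductive lifting is the crux, and I expect it to be the main obstacle.} The naive attempt---choosing for each $v$ a single ``best-reaching'' interval assigned to $v$---fails, because a vertex in the interior of $K$ may need to reach both far left (to a smaller clique member) and far right (to a larger one), and no single interval assigned to it need do both. The resolution I would use is to peel from the extremes. For $|K|\ge 2$, the edge $(v_1,v_s)$ of $\Gamma_t$ is witnessed by an interval $e$ with $v_1,v_s\in e$ and $t(e)\in\{v_1,v_s\}$; because $e$ is an interval containing the leftmost and rightmost points of $K$, it contains all of $K$. If $t(e)=v_1$ I set $I_{v_1}=e$ and recurse on the clique $K\setminus\{v_1\}$; if $t(e)=v_s$ I instead peel $v_s$. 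In either case the peeled node is joined to every other $(I_v,v)$ because both endpoints lie in the spanning interval $e$, giving an $E_{colour}$ edge; combined with the clique supplied by the induction hypothesis on the smaller set, this completes the clique on $K$. The one thing to be careful about is to always peel the endpoint to which the spanning interval is assigned, which is exactly what makes the extra node adjacent to all the others.
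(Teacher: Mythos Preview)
Your argument is correct and follows essentially the same route as the paper: both lift a clique of $\Gamma_t$ to a clique of the same size inside $G_1[S_{min}]$ by induction on the clique size, using the interval that witnesses the edge between the two extreme points to span all the others. Your version is in fact tidier on two points. First, you correctly peel the endpoint $t(e)$ actually points to and recurse on the remainder; the paper always recurses on $u_1,\ldots,u_{q-1}$ and then adds $(I',t(I'))$, which is problematic when $t(I')=u_1$ since $(I',u_1)$ and $(I_1,u_1)$ share a vertex coordinate and are therefore non-adjacent. Second, you spell out the last step---extend the lifted clique to a maximal clique, observe it lies in $\mathcal{Q}_2$ because it contains an $E_{colour}$ edge, and invoke the bound $|S_{min}\cap Q|\le q_{min}$---which the paper leaves implicit after stating only that $\omega(\Gamma_t)$ is bounded by the maximum clique size in $G_1[S_{min}]$.
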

\begin{proof}
Since $S_{min}$ is an exact hitting set of maximal cliques of Type 1, it follows that for every interval $I \in \mathcal{I}$, there exists exactly one node in $S_{min}$ whose hyperedge co-ordinate is $I$. Hence, $t$ is indeed a function. Since every interval is assigned a unique representative by $t$, it follows from the proof of Theorem \ref{thm:Co-occChar} that
any proper colouring of $\Gamma_t$ is a CF colouring of $H$.  Therefore,  $t$ is a representative function obtained from such a CF colouring of $H$.

Now, we show that $\omega(\Gamma_t) \leq q_{min}$.  To prove this, we show that $\omega(\Gamma_t)$ is at most the size of the maximum clique in $G_1[S_{min}]$, that is, the induced subgraph of $G_1$ on the node set $S_{min}$.  In particular, we for each clique in $\Gamma_t$ we identify a clique of the same size in $G_1[S_{min}]$.  The proof is by induction on the size of a clique in $\Gamma_t$. The base case is for a clique of size $1$ in $\Gamma_t$.  Clearly, there is a clique of size at least 1 in $G_1[S_{min}]$.   By the induction hypothesis, corresponding to a clique comprising of $u_1,u_2,\ldots,u_{q-1}$ in $\Gamma_t$, there is a maximal clique containing nodes $(I_1,u_1),(I_2,u_2),\ldots,(I_{q-1},u_{q-1})$ in $G_1[S_{min}]$. Now, we prove the claim when there are $q$ vertices in a clique in $\Gamma_t$. Let $u_1,u_2,\ldots,u_q$ be the set of vertices in the clique. Without loss of generality, assume that $u_1,u_2,\ldots,u_{q-1},u_q$ is the left to right ordering of points on the line. Observe that the edge between $u_1$ and $u_q$ exists in $\Gamma_t$ because there exists an interval, say $I'$ such that $u_1$ and $u_q$ belong to $I'$ and $t(I') \in \{u_1,u_q\}$. It follows that the node $(I',t(I'))$ belongs to $S_{min})$. Since both $u_1$ and $u_q$ belong to the interval $I'$, it follows that $u_2,\ldots,u_{q-1}$ belong to interval $I'$.  Therefore,  it follows that $(I',t(I'))$ is adjacent to all nodes $(I_1,u_1),(I_2,u_2),\ldots,(I_{r-1},u_{q-1})$ in $G_1$. 
Hence $(I',t(I'))$ is adjacent to all nodes $(I_1,u_1),(I_2,u_2),\ldots,(I_{r-1},u_{q-1})$ in the induced subgraph $G_1[S_{min}]$. 
It follows that $(I_1,u_1),(I_2,u_2),\ldots,(I_{r-1},u_{q-1})$ and $(I',t(I'))$ form a clique of size $q$ in $G_1[S_{min}]$. Hence the proof.
\qed
\end{proof}
We next show that finding a CF coloring is equivalent to finding an exact hitting set of $\mathcal{Q}_1$ such that cliques in $\mathcal{Q}_2$ are hit as few times as possible.
\begin{lemma}
\label{lem:hitsetequiv}
There exists a set $S \subseteq V(G_1)$ such that for each $Q \in \mathcal{Q}_1$, $|S \cap Q| = 1$ and for each $Q' \in \mathcal{Q}_2$, $|S \cap Q'| \leq q$ if and only if there is a CF colouring of $H$ with $q$ colours.
\end{lemma}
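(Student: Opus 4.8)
The plan is to prove both implications by routing through the co-occurrence graph $\Gamma_t$ determined by the hitting set $S$ (respectively by the CF colouring), and the single technical device driving both directions is a clean correspondence between cliques sitting inside the chosen nodes of $G_1$ and cliques of $\Gamma_t$. Because $k=1$, the edge set $E_{vertex}$ is empty; hence within any subset of $S$ — whose nodes all carry distinct hyperedge coordinates, ruling out $E_{edge}$ — every adjacency is an $E_{colour}$ edge, and $E_{colour}$ forces distinct vertex coordinates (equivalently, Observation \ref{obs:NodesOfaVertexAreIndep}). Moreover an $E_{colour}$ edge between $(I,t(I))$ and $(J,t(J))$ means $\{t(I),t(J)\}\subseteq I$ or $\{t(I),t(J)\}\subseteq J$, which is exactly the condition that makes $(t(I),t(J))$ an edge of $\Gamma_t$. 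Thus a clique of $G_1$ contained in $S$ projects, via $(I,t(I))\mapsto t(I)$, injectively onto a clique of $\Gamma_t$ of the same size.

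For the forward direction, given $S$ as in the statement I would define $t$ by $t(I)=u$ for $(I,u)\in S$ and invoke Lemma \ref{lem:tIsRepAndOmegaGtLThanKMin} (whose argument applies verbatim to any such $S$ and the corresponding bound $q$, not only to $S_{min},q_{min}$): $t$ is a representative function obtained from a CF colouring and $\omega(\Gamma_t)\le q$. Since $H$ is an interval hypergraph, Theorem \ref{thm:Co-occPerf} gives that $\Gamma_t$ is perfect, so $\chi(\Gamma_t)=\omega(\Gamma_t)\le q$. Extending a proper colouring of $\Gamma_t$ by the colour $0$ on $\mathcal{V}\setminus R$ yields a CF colouring of $H$ with at most $q$ non-zero colours; this is precisely the extension argument already carried out in the proof of Theorem \ref{thm:Co-occChar}. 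So this direction costs essentially nothing beyond quoting the earlier results, and here is the only place perfectness is needed.

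For the reverse direction I would start from a CF colouring $C$ of $H$ with $q$ colours, take a representative function $t$ obtained from $C$, and set $S=\{(I,t(I)):I\in\mathcal{I}\}$. The condition $|S\cap Q|=1$ for every Type-1 clique $Q$ is immediate, as $S$ holds exactly one node per interval. The substance is $|S\cap Q'|\le q$ for each Type-2 clique $Q'$: by the bridge above, $S\cap Q'$ is a clique of $G_1$ with distinct hyperedge and distinct vertex coordinates, so its image under $(I,t(I))\mapsto t(I)$ is a clique of $\Gamma_t$ of size $|S\cap Q'|$. I would then observe that $C$ restricted to $R$ is a proper colouring of $\Gamma_t$ — if $(u,v)$ is a $\Gamma_t$-edge witnessed by an interval $e$ with, say, $t(e)=u$, then $e$ is CF coloured by $u$, so $C(u)\ne C(v)$ — whence $\omega(\Gamma_t)\le\chi(\Gamma_t)\le q$. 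Combining, $|S\cap Q'|\le\omega(\Gamma_t)\le q$. Note that this direction needs only the trivial inequality $\omega\le\chi$ and not perfectness.

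The step I expect to be the main obstacle is making the clique correspondence fully airtight in the reverse direction: one must confirm that an \emph{arbitrary} Type-2 clique of $G_1$ meets $S$ in a node set whose vertex coordinates form a genuine clique of $\Gamma_t$ (rather than a mere subset of $R$), which rests entirely on the careful bookkeeping of the three edge families of $G_1$ at $k=1$ together with the definition of $\Gamma_t$-edges. Once that correspondence is pinned down, the rest is a matter of quoting Theorems \ref{thm:Co-occChar} and \ref{thm:Co-occPerf} and Lemma \ref{lem:tIsRepAndOmegaGtLThanKMin} and the elementary fact that a clique number is bounded by the chromatic number.
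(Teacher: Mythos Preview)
Your proposal is correct and follows essentially the same route as the paper: both directions pass through the co-occurrence graph $\Gamma_t$, the forward direction quotes Lemma~\ref{lem:tIsRepAndOmegaGtLThanKMin}, Theorem~\ref{thm:Co-occPerf}, and Theorem~\ref{thm:Co-occChar}, and the reverse direction builds $S=\{(I,t(I)):I\in\mathcal{I}\}$ from a representative function and then shows that any clique of $G_1$ contained in $S$ projects injectively (via the vertex coordinate) onto a clique of $\Gamma_t$ of the same size. Your observation that the reverse direction needs only $\omega\le\chi$ rather than perfectness is a small sharpening of the paper's write-up, which invokes perfectness there as well even though it is not essential.
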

\begin{proof}
Let $S$ be a subset of $V(G_1)$ such that for each $Q \in \mathcal{Q}_1$, $|S \cap Q| = 1$ and for each $Q' \in \mathcal{Q}_2$, $|S \cap Q'| \leq q$. Then by Lemma \ref{lem:tIsRepAndOmegaGtLThanKMin}, there exists a representative function $t$ such that $\omega(\Gamma_t) \leq q$. Since co-occurrence graphs are perfect by Theorem \ref{thm:Co-occPerf}, it follows 
that $\chi(\Gamma_t) = \omega(\Gamma_t)$. It further follows from Theorem \ref{thm:Co-occChar} that a proper colouring of $\Gamma_t$ is a CF colouring of $H$ using $\omega(\Gamma_t) \leq q$ colours.  This completes the forward direction of the claim.

Now, we prove the reverse direction. Let $C$ be a CF colouring of $H$ using $q'$ colours. Then by Theorem \ref{thm:Co-occChar}, $C$ gives a representative function $t'$ with the property $\chi_{cf}(H) \geq \chi(\Gamma_{t'})$. Since $\Gamma_{t'}$ is perfect, we have $\chi_{cf}(H) \geq \omega(\Gamma_{t'})$. It follows that $q' \geq \chi_{cf}(H) \geq \omega(\Gamma_{t'})$. Define $S' \triangleq \{(I,u) \text{ for all }I \in \mathcal{I} \mid t'(I) = u \}$. The function $t'$ defines a representative for every interval in $\mathcal{I}$ and hence $|S'| = m$. Further, since every node in $S'$ has distinct hyperedge co-ordinate, it follows that $S'$ is an exact hitting set of Type 1 cliques in $G_1$. 
We now show that $S'$ intersects every Type 2 clique at most $q'$ times. Let $G_1[S']$ be the subgraph of $G_1$ induced by nodes in $S'$. Since every maximal clique of Type 1 is hit exactly once by $S'$, it is sufficient to show that the size of a maximum clique in $G_1[S']$ is at most $q'$. In order to show this, we prove that if there is a clique $Q_1$ of size $q'$ in $G_1[S']$, then there is a clique $Q_2$ of size $q'$ in $\Gamma_{t'}$ with the following property. The vertices of $Q_2$ are exactly the vertex co-ordinates of nodes in $Q_1$. Formally, if $(I_1,u_1),(I_2,u_2),\ldots,(I_q,u_q)$ are nodes in $Q_1$, then there is a clique in $Q_2$ with the vertices $u_1,u_2,\ldots,u_q$. Observe that in $Q_1$, the vertex co-ordinates of every node will be distinct because by construction of $G_1$ there are no edges between nodes that have same vertex co-ordinates. Hence it is sufficient to show that if nodes $(I_i,u_i)$ and $(I_j,u_j)$ belong to $Q_1$, then there exists an edge between vertices $u_i$ and $u_j$ in $Q_2$. Since $S$ is an exact hitting set, the intervals $I_i$ and $I_j$ are not the same. Hence the only reason why the edge between $(I_i,u_i)$ and $(I_j,u_j)$ exists is because $u_i,u_j \in I_i$ or $u_i,u_j \in I_j$. Without loss of generality, let $u_i,u_j \in I_i$. Then, by the representative function $t'$, the representative of $I_i$ is the vertex $u_i$. Since $u_j$ also belongs to $I_i$, $(u_i,u_j)$ is an edge in $\Gamma_{t'}$. It follows that for every edge $e$ in $Q_1$, there exists a distinct edge $e'$ in $Q_2$. Hence there exists a clique of size $q'$ in $\Gamma_{t'}$ corresponding to a clique of size $q'$ in $G_1[S]$. Thus we conclude that if there is a CF colouring of $H$ using $q'$ colours, then there exists an exact hitting set of Type 1 cliques of $G_1$ that intersects every Type 2 maximal clique of $G_1$ at most $q'$ times.
\qed
\end{proof}
\subsection{Linear Program for Exact Hitting Sets of Type 1 Cliques} \label{sec:LP}
Given an interval hypergraph $H = (\mathcal{V},\mathcal{I})$, the linear program to find an exact hitting set of $\mathcal{Q}_1$.  From Lemma \ref{lem:hitsetequiv} we know that one such hitting set results in a representative function $t$ such that $\omega(\Gamma_t) = \chi_{min}(H)$. As described in the initial paragraph of Section \ref{sec:CFCfromEHSofType1Cliques}, the above goal translates into finding an exact hitting set of Type 1 cliques such that each clique in Type 2 is hit as few times as possible. 

In this LP, there is one variable corresponding to each node of $G_1$. Define $X \triangleq \{ x_{I,u} \mid (I,u) \in G_1\}$ to be the set of variables in the LP, where
\[
x_{I,u} = 
\begin{cases}
1, &\quad\text{if node } (I,u)  \text{ hits Type 1 clique corresponding to } I\\
0, &\quad\text{otherwise}
\end{cases} 
\]

\noindent
\textbf{LP Formulation. }

\settowidth{\LPlhbox}{(P.1)}%
\noindent%
\parbox{\LPlhbox}{\begin{align}
               \tag{P.1}
               \end{align}}%
\hspace*{\fill}%
\begin{minipage}{\linewidth}
 \begin{align}
&\text{Find values to variables }   \displaystyle \{x_{I,u} \mid u \in I, I \in \mathcal{I}\} \text{ subject to}  \nonumber\\
&\displaystyle\sum\limits_{u \in I} x_{I,u} = 1, \forall I \in 	\mathcal{I}  \\
&\displaystyle\sum\limits_{(I,u) \in Q} x_{I,u} \leq q,  \text{ for each maximal clique }Q \text{ in } \mathcal{Q}_2.\\
&x_{I,u} \leq 1 \nonumber
\end{align}
\end{minipage}
The LP has a set of equations, which are given in (P.1):(1) and a set of inequalities, which are given in (P.1):(2). Logically, an equation corresponds to choosing exactly one vertex per interval; that is, each equation corresponds to choosing exactly one node from one maximal clique in $\mathcal{Q}_1$. On the other hand, an inequality corresponds to a maximal clique in $\mathcal{Q}_2$. Logically, the inequality means that we pick at most $q$ nodes from every maximal clique in $\mathcal{Q}_2$. Together, the solution to the LP is an exact hitting set of maximal cliques in $\mathcal{Q}_1$ such that each maximal clique in $\mathcal{Q}_2$ is hit at most $q$ times. 

This LP is solved using the ellipsoid method which uses a polynomial time separation oracle that we next design. Note that the optimum solution thus obtained may have fractional values. Section \ref{subsubsec:RoundingLP} details a rounding technique that converts this fractional solution to a feasible integer solution for the LP in polynomial time. 
\subsection{Separation Oracle based LP Algorithm $\mathtt{SPAlg}$} \label{sec:SepOracle}
A separation oracle is a polyomial time algorithm that given a point in $\mathbb{R}^d$, where $d$ is the number of variables in a linear program relaxation, either confirms that this point is a feasible solution, or produces a violated constraint \cite{VaziraniApproxAlgo}. In this section, we describe a separation oracle \texttt{SPMaxWtClique} for our LP as follows. Recall that $X = \{ x_{I,u} \mid (I,u) \in V(G_1)\}$ is the set of variables in the LP formulation. Given an assignment $\phi: X \rightarrow \mathbb{Q}$, where $\mathbb{Q}$ is the set of rational numbers, \texttt{SPMaxWtClique} either confirms that $\phi$ is a feasible assignment or returns an infeasible inequality if the assignment is not feasible. We design the separation oracle for the interval hypergraph which has 3 disjoint intervals.  Consider the vertex-weighted graph $G_1^w$ corresponding to $G_1$, where the weight function $w: V(G_1^w) \rightarrow  \mathbb{Q}$ is defined as follows: $w\big((I,u)) = \phi(x_{I,u})$. Find the maximum weight clique of $G_1^w$. If the weight of the maximum weight clique of $G_1^w$ exceeds $q$, then it follows that there is some maximal clique $Q'$ whose weight is more than $q$. This implies that the given point violates  the inequality corresponding to $Q'$. If the weight of the maximum weight clique is at most $q$, then we check if all the equations are feasible. If some equation is violated, then again we have found a violated constraint.  This completes the description of the separation oracle \texttt{SPMaxWtClique}.  We show in Lemma \ref{lem:SepOracPoly} that \texttt{SPMaxWtClique} runs in polynomial time. 
\begin{lemma} \label{lem:SepOracPoly}
If the input interval hypergraph has at least 3 disjoint intervals, then the separation oracle \texttt{SPMaxWtClique} runs in polynomial time.
\end{lemma}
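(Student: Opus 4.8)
The plan is to isolate the single computationally non-trivial operation performed by \texttt{SPMaxWtClique}---the computation of a maximum weight clique of $G_1^w$---and to show that it runs in polynomial time by invoking the perfectness of $G_1$ guaranteed by Theorem \ref{thm:ResConfGraphIsPerfect}; every other step of the oracle is routine bookkeeping that is clearly polynomial. So the proof reduces to two pieces: accounting for the sizes of the objects the oracle manipulates, and handling the clique computation.

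First I would bound the size of the instance handed to the oracle. The node set of $G_1$ is $\{(I,u)\mid I\in\mathcal{I},\ u\in I\}$, so the number of nodes is $\mu(H)=\sum_{I\in\mathcal{I}}|I|$, which is polynomial in the size of the input interval hypergraph; the edge set is a subset of pairs of nodes and is therefore of polynomial size and constructible in polynomial time directly from the definitions of $E_{edge}$ and $E_{colour}$. Building the weighted graph $G_1^w$ from a given assignment $\phi$ only attaches one rational weight $w\big((I,u)\big)=\phi(x_{I,u})$ per node, so it costs polynomial time in $\mu(H)$ and the bit length of $\phi$. Consequently, forming $G_1^w$, comparing a computed clique weight against the threshold $q$, and verifying the equality constraints in (P.1):(1) (there is one such equation per interval, each a sum over the vertices of a single interval) are all plainly polynomial-time operations. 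This settles every step except the clique step.

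The crux is the maximum weight clique computation, which is NP-hard for arbitrary graphs and is therefore the one genuine obstacle. Here I would use the hypothesis of at least three pairwise disjoint intervals to apply Theorem \ref{thm:ResConfGraphIsPerfect}, which asserts that $G_1$ is perfect. Since perfectness is a property of the underlying graph and is unaffected by placing weights on its vertices, $G_1^w$ is a weighted perfect graph on the same vertex and edge set. By the algorithm of Gr{\"o}tschel, Lov{\'a}sz and Schrijver \cite{grotschel2012}, a maximum weight clique of a perfect graph can be found in polynomial time (via the ellipsoid method applied to the theta-function relaxation), crucially \emph{without} enumerating the possibly exponentially many maximal cliques of $\mathcal{Q}_2$. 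Applying this to $G_1^w$ produces a maximum weight clique in time polynomial in $\mu(H)$ and in the bit length of the weights. Combining this with the preceding paragraph shows that every step of \texttt{SPMaxWtClique} is polynomial, which proves the lemma. I would additionally remark that after a preliminary polynomial-time check of the box constraints $0\le\phi(x_{I,u})\le 1$ the weights may be assumed non-negative, so that the globally maximum weight clique is exactly the right quantity to test against $q$: a clique of weight exceeding $q$ can be extended to a maximal clique without losing weight and thereby exhibits a violated inequality of type (P.1):(2).
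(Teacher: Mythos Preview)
Your proposal is correct and follows essentially the same approach as the paper: invoke Theorem~\ref{thm:ResConfGraphIsPerfect} to conclude $G_1$ (and hence $G_1^w$) is perfect under the three-disjoint-intervals hypothesis, appeal to the Gr{\"o}tschel--Lov{\'a}sz--Schrijver result to compute a maximum weight clique in polynomial time, and observe that checking the $m$ equality constraints is trivially polynomial. Your additional bookkeeping on the size of $G_1$ and the remark on non-negativity of weights are reasonable elaborations, but the core argument is identical to the paper's.
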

\begin{proof}
If the input interval hypergraph has at least 3 disjoint intervals, then the vertex-weighted graph $G_1^w$ is perfect by Theorem \ref{thm:ResConfGraphIsPerfect}. It is known from \cite{Grotschel1981} that the maximum weight clique problem in perfect graphs can be solved in polynomial time. 
Thus, finding an inequality in the LP corresponding to a maximal clique whose weight exceeds $q$ can  be done in polynomial time. Also, since there are only a polynomial number of Type 1 clique, it follows that the check of whether there is a violated equation can also be done in polynomial time.  It follows that \texttt{SPMaxWtClique} runs in polynomial time.
\qed
\end{proof}
Let $\mathcal{B}$ be an instance of the given LP.  
Now, we show that the LP can be solved in polynomial time.
\begin{lemma} \label{lem:SPAlgPoly}
If there are at least 3 disjoint intervals in the input interval hypergraph, then the algorithm $\mathtt{SPAlg}$ runs in polynomial time.
\end{lemma}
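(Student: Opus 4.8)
The plan is to show that Lemma~\ref{lem:SPAlgPoly} follows almost immediately from the standard theory of the ellipsoid method together with the polynomial-time separation oracle established in Lemma~\ref{lem:SepOracPoly}. The key fact I would invoke is the equivalence, due to Gr\"otschel, Lov\'asz and Schrijver, between polynomial-time \emph{separation} and polynomial-time \emph{optimization} (or feasibility) over a rational polytope whose encoding length is polynomially bounded. Concretely, the ellipsoid method solves a linear program in a number of iterations that is polynomial in the number of variables and in the bit-size of the coefficients, provided that at each iteration a violated constraint can be produced (or feasibility certified) in polynomial time.

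First I would verify that the LP (P.1) fits the required framework. The number of variables is $|V(G_1)| = \mu(H) = \sum_{I \in \mathcal{I}} |I|$, which is polynomial in the input size $n + m$. The coefficients of all constraints are $0$ or $1$, and the right-hand sides are $1$ or the integer $q$, so every constraint has small encoding length; hence the polytope is well-described in the sense needed by the ellipsoid method. Note that although the number of Type~2 clique inequalities in (P.1):(2) may be exponential, this is exactly the situation the separation-oracle machinery is designed to handle: we never write the constraints down explicitly.

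Next I would apply Lemma~\ref{lem:SepOracPoly}. Under the hypothesis that the input interval hypergraph has at least $3$ disjoint intervals, $G_1$ is perfect by Theorem~\ref{thm:ResConfGraphIsPerfect}, and \texttt{SPMaxWtClique} decides feasibility of a given rational point $\phi$ (or returns a violated inequality or equation) in polynomial time. Feeding this oracle to the ellipsoid method, each iteration runs in polynomial time, and the total number of iterations is polynomial in the number of variables and the bit-sizes involved. Therefore $\mathtt{SPAlg}$, which is simply the ellipsoid method driven by \texttt{SPMaxWtClique} on instance $\mathcal{B}$, terminates in polynomial time and either returns a feasible (possibly fractional) solution to the LP or correctly reports infeasibility.

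The only genuine subtlety --- and the step I would treat most carefully --- is confirming that the hypotheses of the Gr\"otschel--Lov\'asz--Schrijver separation-to-optimization theorem are actually met, in particular that the polytope is rational with polynomially bounded facet complexity and that the oracle returns an \emph{explicit} violated inequality (which \texttt{SPMaxWtClique} does, via the maximum-weight clique it finds) rather than a mere yes/no answer. Everything else is bookkeeping: the dependence of the $3$-disjoint-interval condition is inherited directly from Lemma~\ref{lem:SepOracPoly}, and the polynomial bound on the variable count is immediate. Thus the lemma reduces to a clean invocation of the ellipsoid method, and I would keep the proof short, citing the separation oracle of Lemma~\ref{lem:SepOracPoly} and the standard polynomial-time guarantee of the ellipsoid method as the two load-bearing ingredients.
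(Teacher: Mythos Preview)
Your proposal is correct and follows the same approach as the paper: both invoke Lemma~\ref{lem:SepOracPoly} for a polynomial-time separation oracle and then appeal to the ellipsoid method (equivalently, the Gr\"otschel--Lov\'asz--Schrijver separation-to-optimization equivalence) to conclude polynomial-time solvability of the LP. Your write-up is more careful about verifying the hypotheses (polynomial variable count, bounded encoding length, explicit violated inequality), but the underlying argument is identical to the paper's two-sentence proof.
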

\begin{proof}
We have shown in Lemma \ref{lem:SepOracPoly} that the separation oracle in $\mathtt{SPAlg}$ runs in polynomial time when there are at least 3 disjoint intervals in $\mathcal{I}$. Since there is a polynomial time separation oracle, by referring to the ellipsoid method, optimization in the polytope of $\mathcal{B}$ can be done in polynomial time.
\qed
\end{proof}
We now describe Algorithm $\mathtt{SPAlg}$.   
Algorithm $\mathtt{SPAlg}$ takes as inputs the LP instance $\mathcal{B}$ and an integer $q \geq 1$. It uses the separation oracle \texttt{SPMaxWtClique} and returns an assignment of values to variables in $X$ if the system is feasible.  Otherwise, it reports that the system is infeasible. Let $q_{min}$ be the smallest value of $q$ for which Algorithm $\mathtt{SPAlg}$ finds a feasible solution of the instance $\mathcal{B}$ and let $B_{opt}$ be the solution returned by Algorithm $\mathtt{SPAlg}$. If $B_{opt}$ is an integral solution, then we have an integer solution in polynomial time. If $B_{opt}$ is not integral, then we present steps to round the fractional values in $B_{opt}$ that results in a feasible integral solution for the value $q_{min}$.
\subsection{Rounding the LP solution}\label{subsubsec:RoundingLP} \texttt{RoundingAlgo} described in Algorithm \ref{algo:RoundingAlgo} takes as input a fractional feasible solution of the LP $\mathcal{B}$ and the integer $q_{min}$ and returns a feasible integer solution for $\mathcal{B}$ for the value $q_{min}$. 

\begin{algorithm}[H]
\caption{\texttt{RoundingAlgo}}
\label{algo:RoundingAlgo}
\KwIn{$B_{opt},\mathcal{I}'$}
\vspace{3mm}
$i \leftarrow 0$ \;
$B_{opt}(0) \leftarrow B_{opt}$ \;
\While {$\exists x_{I,v} \in B_{opt}(i)$ that does not belong to $\{0,1\}$}{ 
$i \leftarrow i +1$ \;
$B_{opt}(i) \leftarrow B_{opt}(i-1)$ \;
$I_i \leftarrow $ Longest Interval in $\mathcal{I}'$ with the smallest left endpoint \;
$r \leftarrow r(I_i)$ \;
$r-1 \leftarrow $ vertex to the immediate left of $r(I_i)$ on the line \;
\For{each interval $I'$ that contains $r$ and $r-1$}{
$x_{I',r-1} \leftarrow x_{I',r-1} + x_{I_i,r}$ \; \label{algLine:rounding1}
$x_{I',r} \leftarrow x_{I',r} - x_{I_i,r}$ \;
Modify entries in $B_{opt}(i)$ corresponding to the values changed above \;
\If {$x_{I',r} = 0$}{
$\mathcal{I}' = \mathcal{I}' \setminus I' \cup (I' \setminus r)$ \;
}
}
}
$B_{optI} \leftarrow B_{opt}(i)$ \;
return $B_{opt}(i)$ \;
\end{algorithm}
In every iteration of the \textit{while} loop in Algorithm \ref{algo:RoundingAlgo}, at least one variable in $X$ is rounded to an integer value. 
In iteration $i$, let $I_i$ be the interval with the smallest left end point among all intervals of maximum length. Let $l(I_i)$ and $r(I_i)$ denote the left and right endpoints of interval $I_i$ respectively. Since $r(I_i)$ is removed during iteration $i$, it follows that the total number of points (in all the intervals) in iteration $i+1$ is at least one less than the total number of points in iteration $i$. Hence the conflict graph corresponding to intervals in iteration $i+1$ has strictly fewer number of nodes than the conflict graph corresponding to intervals in iteration $i$. 
In Lemma \ref{lem:feasibleAfterRounding}, we show that for every $i \geq 0$, the solution $B_{opt}(i)$ is feasible for the linear program $\mathcal{B}$ for the value $q_{min}$.   
We show in Lemma \ref{lem:RoundResultsIntegerSoln} that for some positive integer $j$, $B_{opt}(j)$ will be an all integer solution for $\mathcal{B}$, at which time algorithm exits.
\begin{lemma} \label{lem:RoundResultsIntegerSoln}
Let $B_{opt}$ be a fractional feasible solution returned by $\mathtt{SPAlg}(\mathcal{B}, q_{min})$. Then, \texttt{RoundingAlgo} returns an integer solution for $\mathcal{B}$ on the input $B_{opt}$ in a polynomial number of steps.
\end{lemma}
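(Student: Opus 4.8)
The plan is to establish two things separately: that \texttt{RoundingAlgo} halts after polynomially many iterations of its \emph{while} loop, and that on exit the vector $B_{opt}(i)$ has all its coordinates in $\{0,1\}$. Since Lemma~\ref{lem:feasibleAfterRounding} already guarantees that every intermediate $B_{opt}(i)$ stays feasible for $\mathcal{B}$ at the value $q_{min}$, it suffices here to argue about progress and integrality; I would not re-prove feasibility.

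The quantity I would track is the potential $\mu_i := \sum_{I \in \mathcal{I}'} |I|$, the number of nodes of the conflict graph on the current interval family $\mathcal{I}'$ at the start of iteration $i$; initially $\mu_0 = \mu(H) = \sum_{I \in \mathcal{I}} |I| \le nm$. The first step is to show that an iteration is well defined whenever the loop runs. If some variable is fractional, then not every interval can have length $1$, because a length-$1$ interval $I=\{u\}$ forces $x_{I,u}=1$ through its equality constraint in (P.1):(1). Hence the interval $I_i$ chosen as the longest interval with smallest left endpoint satisfies $|I_i| \ge 2$, so the point $r-1$ immediately to the left of $r = r(I_i)$ lies in $I_i$, and the inner \emph{for} loop is entered.

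The key step is to show that $\mu$ strictly decreases each iteration. Since $I_i$ contains both $r$ and $r-1$, it is itself one of the intervals processed by the inner loop. When $I_i$ is processed, its own entry is updated by $x_{I_i,r} \leftarrow x_{I_i,r} - x_{I_i,r} = 0$, so the test $x_{I_i,r}=0$ succeeds and $r$ is deleted from $I_i$ in $\mathcal{I}'$. Because only intervals containing both $r$ and $r-1$ (hence of length at least $2$) ever lose the point $r$, no interval is emptied and $\mu$ drops by at least one. It follows that the loop executes at most $\mu(H) = O(nm)$ times; as each iteration selects $I_i$, scans the $O(m)$ intervals through $r$, and performs $O(m)$ constant-work updates, the total running time is polynomial. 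Finally, since the loop can terminate only when no fractional coordinate remains, the returned $B_{opt}(i)$ is integral.

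The main obstacle is exactly this progress argument: one must verify that a legal step always exists while a fractional variable is present, and that the step genuinely removes a node of the conflict graph rather than merely shuffling fractional mass among variables. The crux is the observation that $I_i$ is itself among the intervals updated, so that subtracting its own value zeroes $x_{I_i,r}$ and forces the deletion of $r(I_i)$; this is what converts the mere existence of a fractional variable into a strict decrease of the integer potential $\mu$, and thereby delivers termination, the polynomial iteration bound, and integrality in one stroke.
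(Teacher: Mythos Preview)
Your proposal is correct and follows essentially the same approach as the paper: both arguments bound the number of iterations by the potential $\mu(H)=\sum_{I}|I|$ and rely on the observation that $I_i$ itself is among the intervals processed, so $x_{I_i,r(I_i)}$ is zeroed and the right endpoint of $I_i$ is removed from $\mathcal{I}'$, strictly decreasing the potential. Your write-up is in fact a bit more careful than the paper's, since you explicitly justify that whenever a fractional variable remains the chosen interval $I_i$ has length at least $2$ (so $r-1\in I_i$ and the inner loop is entered), whereas the paper's line ``the number of variables whose value is not $0$ or $1$ reduces in each iteration'' is stated without addressing the possibility that other variables become fractional.
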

\begin{proof}
From the description of \texttt{RoundingAlgo}, in each iteration $i$, $x_{I_i,r(I_i)}$ becomes zero and the variable $x_{I_i,r(I_i)}$ does not become non-zero in any subsequent iteration.   Then the number of variables whose value is not 0 or 1 reduces in each iteration.  Further,  the rounding is such that if a variable $x_{I,r}$ is reduced by a certain value then $x_{I,r-1}$ is increased by the exact same value.  This ensures that after each iteration the equations the equations  in (P.1):(1)
are all satisfied, and in particular they add up to 1.  Therefore, eventually in each equation there will be a variable which is 1 and all others are 0.   It follows that the solution will be integral in at most $\mu(H)$ iterations, where $\mu(H)$ is the number of nodes in $G_1$.
\qed
\end{proof}

Let $B_{optI}$ be the integer solution returned by \texttt{RoundingAlgo}. We show in Lemma \ref{lem:feasibleAfterRounding} that $B_{optI}$ is feasible for the instance $\mathcal{B}$ for the value $q_{min}$.  That is the values to the variables in each Type 2 inequality add up to at most the same value as it was adding up to in $B_{opt}$.  The proof of correctness for the rounding algorithm is by induction on the number of nodes in $G_1$. We show that the solution returned on a smaller instance after every iteration is feasible for $\mathcal{B}$. In the proof of Lemma \ref{lem:feasibleAfterRounding} below, we use $r$ to denote $r(I_i)$, where $I_i$ is the longest interval with the smallest left endpoint in iteration $i$. Similarly, denote the point to the immediate left of $r$ on the number line by $r-1$. For every other interval $I'$, denote its right endpoint and the point immediately to the left of the right endpoint by $r(I')$ and $r(I') - 1$ respectively. 

\begin{lemma} \label{lem:feasibleAfterRounding}
Let $B_{opt}$ be a fractional feasible solution returned by $\mathtt{SPAlg}(\mathcal{B}, q_{min})$. 
The solution $B_{optI}$ returned by \texttt{RoundingAlgo} is a feasible solution for the LP instance $\mathcal{B}$ for the value $q_{min}$.
\end{lemma}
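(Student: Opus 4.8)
The plan is to prove the invariant that for every $i \ge 0$ the solution $B_{opt}(i)$ is feasible for $\mathcal{B}$ at the value $q_{min}$, and then obtain the lemma by induction on $i$ (equivalently, on the number of nodes of $G_1$, which by the discussion preceding the lemma strictly decreases with each iteration and, by Lemma \ref{lem:RoundResultsIntegerSoln}, reaches an all-integral solution after finitely many steps). Throughout, a point deleted from an interval is treated as carrying value $0$, so the constraints of the original LP $\mathcal{B}$ continue to make sense. It therefore suffices to show that a single iteration, which moves weight from $r=r(I_i)$ to its left neighbour $r-1$ inside every interval $I'$ containing both, maps a feasible solution to a feasible solution. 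There are three things to check: the equalities (P.1):(1), the box constraints $0 \le x_{I,u}\le 1$, and the clique inequalities (P.1):(2).

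The equalities are immediate: for each modified interval $I'$ the same quantity $x_{I_i,r}$ is subtracted from $x_{I',r}$ and added to $x_{I',r-1}$, so $\sum_{u\in I'}x_{I',u}$ is unchanged and remains $1$; intervals not containing both $r$ and $r-1$ are left untouched. For the box constraints the only danger is that some $x_{I',r}$ becomes negative (and, correspondingly, $x_{I',r-1}$ exceeds $1$, which it cannot once non-negativity and the equality hold). Ruling this out amounts to showing $x_{I',r}\ge x_{I_i,r}$ for every $I'$ modified in this iteration, and this is exactly where the extremal choice of $I_i$ as the longest interval with the smallest left endpoint is used: such an $I'$ satisfies $l(I')\ge l(I_i)$ and $r(I')\ge r$, and I would combine the equalities (P.1):(1) for $I_i$ and $I'$ with the clique inequalities straddling the point $r$ to force $x_{I',r}\ge x_{I_i,r}$.

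The substantive step is preserving the clique inequalities (P.1):(2). Write $\delta=x_{I_i,r}$ for the amount transferred. By Observation \ref{obs:NodesOfaVertexAreIndep} every maximal clique of $G_1$ meets each vertex coordinate at most once; hence a clique $Q\in\mathcal{Q}_2$ contains at most one node with vertex coordinate $r$ and at most one with vertex coordinate $r-1$. Consequently the weight of $Q$ changes by at most $+\delta$, and it can strictly increase only when $Q$ contains a \emph{gaining} node $(I',r-1)$ with $r\in I'$ but contains no node that lost weight. The heart of the proof is to show that any such $Q$ had slack at least $\delta$ beforehand, i.e.\ its old weight is at most $q_{min}-\delta$, so its new weight is still at most $q_{min}$. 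I would establish this by exhibiting, in the old graph, a clique $Q^{+}$ of weight (old weight of $Q$)$+\delta$: concretely, augment $Q$ by a node $(K,r)$ with $x_{K,r}\ge\delta$ (the natural candidates being $(I_i,r)$ or $(I',r)$) that is adjacent to every node of $Q$; extending $Q^{+}$ to a maximal clique and applying feasibility of $B_{opt}(i-1)$ then yields the bound.

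I expect the adjacency check for this augmentation to be the main obstacle. Adjacency of $(K,r)$ to a node $(J,u)\in Q$ requires $u\ne r$ together with $u\in K$ or $\{u,r\}\subseteq J$, and the awkward case is a node $(J,u)\in Q$ whose adjacency to the gaining node $(I',r-1)$ is witnessed inside $I'$ rather than inside $J$, so that $J$ need not contain $r$. Resolving it requires a careful case analysis on the position of $u$ relative to $l(I')$, $r-1$, $r$ and $r(I')$, again leaning on the fact that $I_i$ is the longest interval with the smallest left endpoint (which controls how far such intervals $J$ and $I'$ can extend past $r$ and how far left their left endpoints can sit). This is where I anticipate spending most of the effort; once the augmenting node is produced in every case, the per-iteration invariant follows and the induction completes the proof.
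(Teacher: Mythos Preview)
Your overall plan (induction on iterations, the equalities for free, then analysing a maximal $Q\in\mathcal{Q}_2$ that contains a gaining node) is exactly the paper's. Two points of divergence are worth flagging.

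\textbf{Box constraints.} Your proposed inequality $x_{I',r}\ge x_{I_i,r}$ is not justified, and I do not see how to derive it from (P.1):(1)--(2) in the way you sketch; for $q\ge 2$ the relevant clique constraints straddling $r$ only yield $x_{I_i,r}-x_{I',r}\le q-1$, not $\le 0$. The paper simply does not verify non-negativity (or $x\le 1$) at intermediate steps: its proof checks only (P.1):(1) and (P.1):(2) after each iteration and relies on Lemma~\ref{lem:RoundResultsIntegerSoln} for the $\{0,1\}$ shape of the terminal solution. You should drop this step rather than try to prove it.

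\textbf{Clique inequalities.} Here you take an unnecessary detour. You already assume $Q$ is \emph{maximal}; so once you exhibit a node $(K,r)$ adjacent to every node of $Q$, maximality forces $(K,r)\in Q$. For your candidates $K\in\{I_i,I'\}$ the interval contains $r-1$, so $(K,r)$ is a losing node---contradicting the case hypothesis that $Q$ contains none. Thus the case ``gaining node but no losing node'' is impossible, and no slack estimate (hence no lower bound on $x_{K,r}$, hence no dependence on the dubious box-constraint step) is needed. This is precisely the paper's route. Concretely, the paper lets $\lambda,\rho$ be the least and greatest vertex coordinates occurring in $Q$, proves $\lambda\ge l(I_i)$ from the extremal choice of $I_i$, and then splits on $\rho$: if $\rho=r-1$ every vertex coordinate of $Q$ lies in $I_i$, so $(I_i,r)$ is adjacent to all of $Q$, contradicting maximality; if $\rho\ge r$ then $Q$ already contains a node with vertex coordinate $r$, which is argued to be losing. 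Organising the case analysis by $\rho$ rather than by the position of a generic $u$ dissolves the ``awkward case'' you anticipate.
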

\begin{proof}
The proof of correctness is by induction on the iteration number.   We know that $B_{opt}$ is feasible for $\mathcal{B}$.  Let us assume that for an integer $i \geq 0$
 $B_{opt}(i-1)$ is feasible for $\mathcal{B}$. We show that $B_{opt}(i)$ is also feasible for $\mathcal{B}$. 
From the description of the \texttt{RoundingAlgo}, during iteration $i$, the value which is subtracted from one variable from $x_{I,r}$ is added to the variable $x_{I,r-1}$. This fact is crucially in the analysis below.   Hence all equations in (P.1):(1) are satisfied by $B_{opt}(i)$. Now, we show that the inequalities in (P.1):(2) corresponding to the maximal cliques are also satisfied by $B_{opt}(i)$.  Let $I'$ be an interval that contains the point $r-1$ such that $x_{I',r-1}$ has increased due to step \ref{algLine:rounding1} in Algorithm \ref{algo:RoundingAlgo}. By the choice of $I'$ for which $x_{I',r-1}$ is increased, it follows that $x_{I',r}$ is reduced and thus $I'$ contains the point $r$. 
It follows from the definition  of the edge set  $E_{colour}$ that there is an edge between $(I',r-1)$ and $(I_i,r)$ in $G_1$. 

Let $Q$ be a maximal clique that contains the node $(I',r-1)$. By Observation \ref{obs:NodesOfaVertexAreIndep}, all nodes with the same vertex co-ordinate form an independent set. Hence $Q$ does not contain any node of the form $(I'',r-1)$, where $I'' \neq I'$. 
If $Q$ contains the node $(I',r)$, then $x_{I',r}$ has reduced and hence the inequality corresponding to $Q$ is satisfied under $B_{opt}(i)$. 
If $Q$ does not contain the node $(I',r)$, then among all nodes in $Q$, consider two nodes - one for which the vertex coordinate is leftmost and another for which the vertex coordinate is the rightmost on the line. We denote the leftmost coordinate by $\lambda$ and the rightmost coordinate by $\rho$. Let $(J,\lambda)$ and $(J',\rho)$ be  two nodes in $Q$. 

First, we show that $\lambda \geq l(I_i)$. The proof is by contradiction. Suppose $\lambda < l(I_i)$. Due to the edge between nodes $(J,\lambda)$ and $(I',r-1)$ in $Q$, it is clear that either $J$ or $I'$ contains both $\lambda$ and $r-1$. Without loss of generality, assume that $J$ contains both $\lambda$ and $r-1$. Since by our assumption $\lambda < l(I_i)$, it follows that $J$ is at least as long as $I_i$ and $l(J) < l(I_i)$. This is a contradiction to our choice of $I_i$ being the longest interval with the smallest left endpoint. It follows that $\lambda \geq l(I_i)$. 
We show using the following cases that the inequality corresponding to $Q$ is still feasible.
\begin{enumerate} 
\item Case $\rho < r-1$. We show that this case is not possible. Since $(I',r-1)$ belongs to $Q$, and $\rho$ is the rightmost vertex co-ordinate among all nodes in $Q$, it follows that $\rho \geq r-1$. 
\item Case $\rho = r-1$. 
Since $\lambda \geq l(I_i)$ and $\rho = r-1$, it follows that all points from $\lambda$ to $\rho$ belong to $I_i$.  Therefore, by the definition of the edges of $G_1$, $(I_i, r)$ is adjacent to all the nodes of $Q$ whose vertex coordinates are between $\lambda$ and $\rho$, both included.  This contradicts the premise that $Q$ is a maximal clique. Therefore $\rho = r-1$ is not possible.  
\item Case $\rho = r$. Since $(J',\rho)$, which is the same as $(J',r)$ belongs to $Q$, it follows that the inequality corresponding to $Q$ is still feasible.  Since the decrease in $x_\{J',r\}$ is exactly the same as the increase in $x_{I_i,r-1}$.  
\item Case $\rho > r$. Observe that there is an edge between nodes $(J,\lambda)$ and $(J',\rho)$ since they are both in $Q$. It follows that either $J$ or $J'$ both contain $\lambda$ and $\rho$. Without loss of generality, let $J$ be this interval. Since $J$ contains all the points on the line from $\lambda$ to $\rho$, both included, it follows that the interval $J$ contains both points $r$ and $r-1$. Further, by the definition of the graph $G_1$, it follows that $(J,r)$ is adjacent to all the nodes in $Q$ whose vertex coordinates lie between $\lambda$ and $\rho$, both included.  Further, since there can be at most one node in a maximal clique with a vertex coordinate, and since $Q$ is a maximal clique, it follows that  $(J,r)$ belongs to $Q$.  Since $J$ also contains the point $r-1$, and since $x_{I_i,r}$ is reduced in iteration $i$, follows that $x_{J,r}$ is also reduced and $x_{J,r-1}$ is increased in iteration $i$.  Therefore, in the maximal clique $Q$ the increase in $x_{I',r-1}$ is compensated by a decrease in $x_{J,r}$.  Therefore, the inequality corresponding to $Q$ is satisfied in by $B_{opt}(i)$.  
\end{enumerate}
Therefore, in all the cases we have concluded the $B_{opt}(i)$ satsfies ${\mathcal B}$.  This completes the proof by induction.
\qed
\end{proof}

\noindent
We show in Theorem \ref{thm:CFCIntHypPolyTime} that the CF colouring problem in interval hypergraphs can be solved in polynomial time by considering two cases of $H$ - when there are at least 3 disjoint intervals in $H$ and when there are at most two disjoint intervals in $H$. In Lemma \ref{lem:3DisjIntCFCPolyTime}, we show that the first case can be solved in polynomial time and in Lemma \ref{lem:2disjInt2CFC}, we show  a polynomial time solution  for the second case.
\begin{lemma} \label{lem:3DisjIntCFCPolyTime}
Let $H = (\mathcal{V},\mathcal{I})$ be an interval hypergraph such that there are at least 3 disjoint intervals in $\mathcal{I}$. Then, the CF colouring problem in $H$ can be solved in polynomial time.
\end{lemma}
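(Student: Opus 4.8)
The plan is to assemble the machinery developed in this section --- the linear program (P.1), the separation-oracle algorithm $\mathtt{SPAlg}$, the procedure \texttt{RoundingAlgo}, and the two perfectness results --- into the pipeline described in Algorithm \ref{algo:MainAlgo}, and then to argue both correctness and polynomial running time. The heart of the correctness argument is to show that the value $q_{min}$ computed by the algorithm equals $\chi_{cf}(H)$, and that the co-occurrence graph it builds can be properly coloured with exactly that many colours.

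First I would run $\mathtt{SPAlg}(\mathcal{B}, q)$ for $q = 1, 2, \ldots$ and let $q_{min}$ be the smallest $q$ for which the LP $\mathcal{B}$ is feasible. Any integral exact hitting set of $\mathcal{Q}_1$ that hits each Type 2 clique at most $q$ times is in particular a feasible point of the relaxation, so $q_{min}$ is a lower bound on the smallest $q$ for which such an integral hitting set exists. For the matching upper bound I would invoke Lemmas \ref{lem:RoundResultsIntegerSoln} and \ref{lem:feasibleAfterRounding}: applying \texttt{RoundingAlgo} to the (possibly fractional) solution $B_{opt}$ returned by $\mathtt{SPAlg}(\mathcal{B}, q_{min})$ yields an integral solution $B_{optI}$ that is still feasible for $\mathcal{B}$ at the value $q_{min}$. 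Hence the integral optimum coincides with $q_{min}$, and by Lemma \ref{lem:hitsetequiv} this common value is exactly $\chi_{cf}(H)$.

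Next I would read off the representative function from $B_{optI}$. The set $R = \{(I,u) \mid x_{I,u}=1 \text{ in } B_{optI}\}$ is an exact hitting set of the Type 1 cliques that meets every Type 2 clique at most $q_{min}$ times and defines $t(I)=u$ for $(I,u)\in R$. By Lemma \ref{lem:tIsRepAndOmegaGtLThanKMin}, $t$ is a representative function obtained from a CF colouring and $\omega(\Gamma_t)\le q_{min}$. Since $H$ is an interval hypergraph, Theorem \ref{thm:Co-occPerf} gives that $\Gamma_t$ is perfect, whence $\chi(\Gamma_t)=\omega(\Gamma_t)\le q_{min}=\chi_{cf}(H)$. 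Finally, Theorem \ref{thm:Co-occChar} lets me extend any proper colouring of $\Gamma_t$ to a CF colouring of $H$ using $\chi(\Gamma_t)$ non-zero colours; as this number is at most $\chi_{cf}(H)$, the resulting colouring is optimal and the displayed inequalities are in fact equalities.

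For the running time I would chain the polynomial bounds already established: $G_1$ is built in polynomial time (Theorem \ref{thm:IndepEqCFC}); $\mathtt{SPAlg}$ runs in polynomial time because $H$ has at least three disjoint intervals (Lemma \ref{lem:SPAlgPoly}), and the search for $q_{min}$ ranges over at most $n$ values; \texttt{RoundingAlgo} terminates in polynomially many steps (Lemma \ref{lem:RoundResultsIntegerSoln}); $\Gamma_t$ is constructed in polynomial time; and a proper colouring of the perfect graph $\Gamma_t$ is computed in polynomial time by the algorithm of Gr{\"o}tschel et al.\ \cite{grotschel2012}. The step I expect to require the most care is the identity $q_{min}=\chi_{cf}(H)$, since it must simultaneously use the trivial relaxation lower bound, the nontrivial fact (Lemma \ref{lem:feasibleAfterRounding}) that rounding preserves feasibility at the \emph{same} value $q_{min}$, and the equivalence of Lemma \ref{lem:hitsetequiv}, so that the LP optimum, the integral optimum, and the CF colouring number all collapse to a single value.
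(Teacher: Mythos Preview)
Your proposal is correct and follows essentially the same pipeline as the paper's proof: solve the LP via $\mathtt{SPAlg}$ (polynomial by Lemma~\ref{lem:SPAlgPoly}), round with \texttt{RoundingAlgo} (Lemmas~\ref{lem:RoundResultsIntegerSoln} and~\ref{lem:feasibleAfterRounding}), extract the representative function $t$, and then properly colour the perfect co-occurrence graph $\Gamma_t$ (Theorems~\ref{thm:Co-occPerf} and~\ref{thm:Co-occChar}). Your write-up is in fact more explicit than the paper's terse version in justifying optimality --- you spell out, via Lemma~\ref{lem:hitsetequiv} and Lemma~\ref{lem:tIsRepAndOmegaGtLThanKMin}, why the integral optimum, the LP optimum $q_{min}$, and $\chi_{cf}(H)$ all coincide --- whereas the paper leaves this implicit.
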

\begin{proof}
By Lemma \ref{lem:SPAlgPoly}, when there are at least 3 disjoint intervals in $\mathcal{I}$, the LP returns a feasible solution in polynomial time using the separation oracle \\ \texttt{SPMaxWtClique}. By Lemmas \ref{lem:RoundResultsIntegerSoln} and \ref{lem:feasibleAfterRounding}, a feasible integer solution can be obtained from the fractional feasible solution in polynomial time. Further, the representative function $t$ and thereof, the co-occurrence graph $\Gamma_t$ can also be obtained in polynomial time. By Theorem \ref{thm:Co-occPerf}, the co-occurrence graph $\Gamma_t$ is perfect. Since a proper colouring of a perfect graph can be found in polynomial time, it follows from Theorem \ref{thm:Co-occChar} that an optimal CF colouring of an interval hypergraph can be found in polynomial time.
\qed
\end{proof}

We next bound the CF colouring number of an interval hypergraph which does not contain  3 pairwise disjoint intervals.
\begin{lemma} \label{lem:2disjInt2CFC}
Let $H = (\mathcal{V},\mathcal{I})$ be an interval hypergraph which does not contain three disjoint intervals in $\mathcal{I}$.  Then $\chi_{cf}(H) \leq 2$. Further, such an interval hypergraph can be recognized in polynomial time.
\end{lemma}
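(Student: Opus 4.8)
The plan is to reduce the hypothesis to a statement about hitting all intervals with just two points, and then simply to colour those two points. The condition that $\mathcal{I}$ contains no three pairwise disjoint intervals says exactly that the maximum number of pairwise disjoint intervals is at most $2$; since intervals on a line form an interval graph under intersection, and interval graphs are perfect, the minimum clique cover of this intersection graph is then at most $2$, and each clique, being a family of pairwise intersecting intervals, has a common point by the Helly property on the line. Hence there exist points $v_1,v_2 \in \mathcal{V}$ such that every interval in $\mathcal{I}$ contains $v_1$ or $v_2$. I would in fact obtain these two points constructively by the standard greedy on right endpoints, which also serves the recognition part: let $I_1$ be the interval of smallest right endpoint and set $v_1 = r(I_1)$. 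Every interval meeting $I_1$ then contains $v_1$, since if $J$ intersects $I_1$ then $l(J) \le r(I_1) \le r(J)$, so $r(I_1) \in J$; consequently every interval not containing $v_1$ is disjoint from $I_1$ and, by minimality of $r(I_1)$, lies strictly to its right. Repeating the construction on the intervals avoiding $v_1$ produces $I_2$ and $v_2 = r(I_2)$, and if some interval avoided both $v_1$ and $v_2$ it would be disjoint from $I_1$ and $I_2$, yielding three pairwise disjoint intervals and contradicting the hypothesis. If a single point already meets every interval, set $v_1 = v_2$ and the argument degenerates to one colour.

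For the colouring, define $C$ by assigning colour $1$ to $v_1$, colour $2$ to $v_2$, and colour $0$ to every other vertex, exactly as in the \texttt{Else} branch of Algorithm \ref{algo:MainAlgo}. The key point is that colour $1$ occurs only at $v_1$ and colour $2$ only at $v_2$ in the \emph{entire} vertex set, so for any interval $I$ containing $v_1$ we have $|I \cap C^{-1}(1)| = 1$, and for any interval containing $v_2$ we have $|I \cap C^{-1}(2)| = 1$. Since every interval contains $v_1$ or $v_2$, every hyperedge sees a non-zero colour exactly once, so $C$ is a CF colouring using at most two non-zero colours and $\chi_{cf}(H) \le 2$. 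An interval containing both $v_1$ and $v_2$ is harmless: it is CF coloured by either endpoint, since each of colours $1$ and $2$ still appears uniquely within it.

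For the recognition claim I would run the greedy above to completion: sort the intervals by right endpoint and repeatedly select a smallest-right-endpoint interval disjoint from those already chosen. This computes a maximum family of pairwise disjoint intervals in $O(m \log m)$ time, and $H$ satisfies the hypothesis precisely when this family has size at most $2$; in the affirmative case the same run supplies the two points $v_1,v_2$ needed for the colouring. The only step warranting care is the first reduction, namely that ``no three pairwise disjoint intervals'' genuinely forces a two-point hitting set. The greedy argument on right endpoints makes this self-contained and also yields the recognition algorithm, while the perfectness-plus-Helly route gives the same conclusion more conceptually; I expect this reduction to be the main (and only mild) obstacle, the colouring verification being a routine case check.
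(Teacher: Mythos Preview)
Your proposal is correct and follows essentially the same line as the paper: both arguments pass through the interval-graph intersection model, use that no three pairwise disjoint intervals forces a clique cover (equivalently, a two-point hitting set), and then assign colours $1$ and $2$ to the two hitting points and $0$ elsewhere. You add a concrete greedy construction for the two points (and for the recognition step), whereas the paper simply cites perfectness of interval graphs and the polynomial-time computability of a minimum clique cover; this extra constructiveness is a mild bonus but not a genuinely different route.
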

\begin{proof}
Let us consider the intersection graph of the set of intervals $\mathcal{I}$ which we know is an interval graph. It is well-known (see for example the book by Golumbic \cite{Gol2004}) that the interval graph is perfect. Since there do not exist 3 disjoint intervals, it follows that the interval graph has a maximum independent set of size at most 2.  From the definition of perfect graphs (see Section \ref{sec:Prelims}), we know that the size of the maximum independent set is equal to the size of the minimum clique cover, and it can be found in polynomial time.  Therefore, the interval graph of $\mathcal{I}$ has a clique cover of size at most 2.  The clique cover gives at most two corresponding points in $\mathcal{V}$ that  intersect each interval in $\mathcal{I}$.   We now consider the following vertex colouring function defined on $\mathcal{V}$:
colour one of the points with colour 1 and the other point, if necessary, with colour 2, and all the other points are coloured 0.  Since at most colours 1 and 2 are given to at most two vertices in $\mathcal{V}$ and all the other vertices are given the colour 0, this vertex colouring function is a CF colouring of $H$.  Thus, $\chi_{cf}(H) \leq 2$ and the recognition of such interval hypergraphs can also be done in polynomial time. 
\qed
\end{proof}

\noindent
Finally, we prove the main result in this paper. 
\begin{proof}[of Theorem \ref{thm:CFCIntHypPolyTime}]
If $H$ is an exactly hittable interval hypergraph then, by Lemma \ref{lem:1CFimpliesEHS}, $\chi_{cf}(H) = 1$.  From Theorem \ref{thm:IHisEH}, an exactly hittable interval hypergraph can be recognized in polynomial time. If $H$ is not exactly hittable and if $H$ does not have 3 disjoint intervals, then by Lemma \ref{lem:2disjInt2CFC}, $\chi_{cf}(H) \leq 2$ and such an $H$ can be recognized in polynomial time. 
If the minimum clique cover of $H$ is  at least $3$, then there are at least three pairwise disjoint intervals in $\mathcal{I}$. It has been shown in Lemma \ref{lem:3DisjIntCFCPolyTime} that the CF colouring problem can be optimally solved in polynomial time. From the above results it follows that an optimal CF colouring of $H$ can be obtained in polynomial time.
\qed
\end{proof}

\section{Partition into Exactly Hittable Sets and Conflict-free Colouring} \label{sec:cfeqehs}
Using Lemmas \ref{lem:CFimpliesEHS}, \ref{lem:kPartskClique} and \ref{lem:kClique} we prove Theorem \ref{thm:EHS-CF}.    We first show that for any hypergraph a CF colouring with $k$ colours gives a partition of the hyperedges into exactly $k$ exactly hittable hypergraphs.  Then we prove interval hypergraphs that can be partitioned into $k$ exactly hittable hypergraphs can be CF coloured with $k$ colours.
\begin{lemma} \label{lem:CFimpliesEHS}
If there exists a CF colouring of a hypergraph $H = (\mathcal{V},\mathcal{E})$ with $k$ non-zero colours, then there exists a partition of $\mathcal{E}$ into $k$ parts $\{\mathcal{E}_1, \ldots, \mathcal{E}_k\}$ such that each $H_i = (\mathcal{V}, \mathcal{E}_i), 1 \leq i \leq k$ is an exactly hittable hypergraph.  
\end{lemma}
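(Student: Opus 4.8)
The plan is to partition the hyperedges according to which colour conflict-free colours them, and then to observe that the corresponding colour class of vertices is automatically an exact hitting set for that part. First I would fix a CF colouring $C : \mathcal{V} \to \{0,1,\ldots,k\}$ of $H$ using $k$ non-zero colours. By the definition of a CF colouring, for every hyperedge $e \in \mathcal{E}$ there is at least one non-zero colour $j \in [k]$ with $|e \cap C^{-1}(j)| = 1$. To obtain a genuine partition (rather than a cover), I would assign to each $e$ the \emph{smallest} such colour, call it $\phi(e)$, and set $\mathcal{E}_j = \{\, e \in \mathcal{E} : \phi(e) = j \,\}$ for $1 \le j \le k$. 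Since $\phi$ is a well-defined single-valued function on $\mathcal{E}$ whose existence is guaranteed by the CF property, the classes $\mathcal{E}_1,\ldots,\mathcal{E}_k$ are pairwise disjoint and together cover $\mathcal{E}$, so they form a partition into $k$ parts (some of which may be empty).

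The second step is to verify that each $H_j = (\mathcal{V}, \mathcal{E}_j)$ is exactly hittable. Here I would take the colour class $T_j = C^{-1}(j) = \{\, v \in \mathcal{V} : C(v) = j \,\}$ as the candidate exact hitting set. By the definition of $\phi$, every $e \in \mathcal{E}_j$ satisfies $|e \cap T_j| = |e \cap C^{-1}(j)| = 1$, which says precisely that $T_j$ meets each hyperedge of $H_j$ in exactly one vertex; hence $T_j$ is an exact hitting set and $H_j$ is exactly hittable. Equivalently, the colouring that gives the vertices of $T_j$ a single non-zero colour and every other vertex the colour $0$ is a CF colouring of $H_j$ with one non-zero colour, so exact hittability also follows immediately from Observation \ref{lem:1CFimpliesEHS}. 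Empty parts are vacuously exactly hittable via the empty hitting set.

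There is essentially no hard obstacle in this argument: the only two points needing care are ensuring $\phi$ is single-valued (handled by selecting the smallest valid colour, so that a hyperedge that could be CF coloured by several colours is placed in exactly one part) and permitting empty colour classes, which cause no difficulty. The crux is the simple observation that, by construction, the colour class $C^{-1}(j)$ is forced to intersect each hyperedge of $\mathcal{E}_j$ in exactly one vertex, which is exactly the exact-hitting-set condition.
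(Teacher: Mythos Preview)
Your proof is correct and follows essentially the same idea as the paper's: both partition $\mathcal{E}$ according to a chosen ``conflict-free colour'' for each hyperedge and then observe that the corresponding colour class hits each part exactly once. The only cosmetic differences are that the paper phrases the choice via a representative function $t$ (taking $\mathcal{E}_i = \{e : C(t(e)) = i\}$ and hitting set $h_i = \{t(e) : e \in \mathcal{E}_i\} \subseteq C^{-1}(i)$), whereas you pick the smallest valid colour directly and use the full colour class $C^{-1}(j)$ as the hitting set; both variants work for the same reason.
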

\begin{proof}
Given a CF colouring $C$ with at most $k$ non-zero colours, let $t$ be a representative function $t:\mathcal{E} \rightarrow \mathcal{V}$ such that for each $e \in \mathcal{E}$, $e$ is CF coloured by the vertex  $t(e)$.  The hyperedges are partitioned into sets $\{\mathcal{E}_1, \ldots \mathcal{E}_k\}$ based on $t$ and the vertex colouring as follows:  the set $\mathcal{E}_i$ consists of all those hyperedges $e \in \mathcal{E}$ such that the colour of  $t(e)$ is $i$.  We show that  for each $1 \leq i \leq k$, $H_i=(\mathcal{V},\mathcal{E}_i)$ is an exactly hittable hypergraph and the exact hitting set is $h_i = \{t(e) \mid e \in \mathcal{E}_i\}$.   
$h_i$ is a hitting set of $\mathcal{E}_i$ because for each $e \in \mathcal{E}_i$, $t(e)$ is in $h_i$. Since all the vertices of $h_i$ have the same colour assigned by $C$, it follows that each $e \in \mathcal{E}_i$ is hit exactly once by $h_i$. Thus, $h_i$ is an exact hitting set of $\mathcal{E}_i$.  Therefore, each $H_i = (\mathcal{V},\mathcal{E}_i)$ is an exactly hittable hypergraph. This proves the  lemma.\qed
\end{proof}We next set up the machinery to conclude that if we are given a partition of an interval hypergraph $H$ into $k$ exactly hittable interval hypergraphs, then we get a CF colouring with at most $k$ non-zero colours.
Let $P = \{\mathcal{E}_1, \mathcal{E}_2, \ldots, \mathcal{E}_k\}$ be a partition of intervals in $\mathcal{E}(H)$, such that each $H_i = (\mathcal{V},\mathcal{E}_i), 1 \leq i \leq k$  is an exactly hittable interval hypergraph. We show that there is a CF colouring of $H$ with $k$ non-zero colours. Let $h_1, \ldots, h_k$ be the exact hitting sets of the parts $\mathcal{E}_1, \ldots, \mathcal{E}_k$ respectively. Let $R$ denote the set $\cup_{i=1}^k  h_i$. For each interval $I \in \mathcal{E}_i$, let  $t(I)$ be the only vertex in $I \cap h_i$.  
Let $\Gamma_t$ be the co-occurrence graph of $H$.  In the arguments below, the graph $\Gamma_t$ and the representative function $t$ are as defined here.   We now prove Lemmas \ref{lem:kPartskClique} and \ref{lem:kClique} and use them in the proof of Theorem \ref{thm:EHS-CF}.  
\begin{lemma} \label{lem:kPartskClique}
Let $Q$ = $\{u_1, \ldots, u_q\}$ be a clique of size $q$ in the co-occurrence graph $\Gamma_t$. Then, there are $q$ distinct parts $s_1, \ldots, s_q$ in the set $\mathcal{P}$ containing intervals $I_1, \ldots, I_q$ respectively, satisfying the following property: for each $u_i$ in $Q$, $u_i$ is the representative of interval $I_i$ and for each edge $(u_i,u_j)$ in $Q$ either $u_j$ is in $I_i$ or $u_i$ is in $I_j$.
\end{lemma}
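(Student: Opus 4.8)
The plan is to induct on the clique size $q$. First I would fix the left-to-right order of the clique vertices on the line, writing $u_1 < u_2 < \cdots < u_q$; since an induced subgraph of a clique is again a clique, every subset of these vertices remains a clique in $\Gamma_t$, and this is exactly what makes an inductive peeling argument available. The base case $q=1$ is immediate: a single vertex $u_1$ of $\Gamma_t$ is a representative, i.e.\ $u_1 = t(I_1)$ for some interval $I_1$ lying in some part, and there are no edges to verify.

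For the inductive step I would examine the edge joining the two extreme points $u_1$ and $u_q$. By the definition of $\Gamma_t$ this edge is witnessed by an interval $I$ with $\{u_1,u_q\} \subseteq I$ and $t(I) \in \{u_1,u_q\}$; because $I$ is an interval it is contiguous, so in fact all of $u_1,\ldots,u_q$ lie in $I$. Suppose $t(I)=u_q$ (the case $t(I)=u_1$ is symmetric, peeling from the left). Let $p$ be the part with $I \in \mathcal{E}_p$. The crucial observation, which I expect to be the heart of the argument, is that since $h_p$ is an \emph{exact} hitting set of $\mathcal{E}_p$, the interval $I$ meets $h_p$ in exactly the single vertex $t(I)=u_q$; hence none of $u_1,\ldots,u_{q-1}$ belongs to $h_p$. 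I then assign $u_q$ the interval $I_q = I$ and part $s_q = p$, and apply the induction hypothesis to the clique $\{u_1,\ldots,u_{q-1}\}$ to obtain intervals $I_1,\ldots,I_{q-1}$ in distinct parts $s_1,\ldots,s_{q-1}$ with $t(I_i)=u_i$.

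It then remains to check the two conclusions. For distinctness of parts: each $s_i$ satisfies $u_i \in h_{s_i}$ (as $u_i = t(I_i)$ is the unique vertex of $I_i \cap h_{s_i}$), whereas $u_i \notin h_p$ for $i<q$ by the exactness observation, so $s_i \neq p = s_q$; combined with the distinctness of $s_1,\ldots,s_{q-1}$ supplied by the induction hypothesis, all $q$ parts are distinct. For the containment property: edges inside $\{u_1,\ldots,u_{q-1}\}$ are handled by the induction hypothesis, and for an edge $(u_i,u_q)$ with $i<q$ I use that $I_q = I$ contains every $u_i$, so $u_i \in I_q$, which is one of the two permitted alternatives. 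I expect the only genuine obstacle to be making the exactness step airtight, namely arguing carefully that the spanning witness interval really does cover all intermediate clique points and that its hitting set therefore excludes them; once this is secured, both the distinctness of the $q$ parts and the containment claim fall out mechanically from the induction.
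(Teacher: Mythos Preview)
Your proposal is correct and follows essentially the same inductive argument as the paper: order the clique vertices on the line, take the witness interval $I$ for the edge between the two extreme points (which then contains all clique vertices), assign its representative and part to one endpoint, and use exactness of the hitting set to force that part to be distinct from the $q-1$ parts supplied by induction. The only cosmetic differences are that the paper peels off $u_1$ rather than $u_q$ and phrases the distinctness step as a proof by contradiction, whereas you argue it directly via $u_i \in h_{s_i}$ but $u_i \notin h_p$; you also verify the containment condition for the new edges $(u_i,u_q)$ explicitly, which the paper leaves implicit.
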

\begin{proof}
The proof is by induction on the size $q$ of the clique. The claim is true for base case when $q = 1$; then $u_1$ is the representative of some interval $I_1$ in some part $s_1$. Assume that the claim is true for any clique of size $q-1$. Now, we show that the claim is true for clique $Q$ of size $q$. Let $u_1 < \ldots <u_q$ be the left to right ordering of the points (on the line) corresponding to vertices in the clique $Q$. Since $(u_1,u_q)$ is an edge in $Q$, there must exist an interval $I$ such that either $u_1$ or $u_q$ is the representative of $I$ and $u_q$ occurs along with $u_1$ inside $I$. Without loss of generality, assume that $u_1$ is the representative of interval $I$. Observe that $I$ must contain all points in $u_1, \ldots, u_q$. By the induction hypothesis for the points $u_2, \ldots, u_q$, there are parts $s_2, \ldots, s_q$ and intervals $i_2,\ldots,i_q$ such that $u_i$ is representative of $I_i$ and for each edge $(u_i,u_j)$ in the clique on points in $u_2, \ldots, u_q$, either $u_j$ is in $I_i$ or $u_i$ is in $I_j$. We now show that $I$ does not belong to the parts $s_2, \ldots, s_q$ and that it belongs to a different part. Assume for contradiction that $I$ belongs to some part $s_j$ in $\{s_2, \ldots, s_q\}$. Then, the exact hitting set of set $s_j$ contains at least one point $u_j \in Q$ that is distinct from $u_1$. $u_j$ and $u_1$ cannot be the same point because there is an interval $I_j$ in $s_j$ whose representative is $u_j$. Observe that $u_1$ which is the representative of $I$ must also be in the exact hitting set of $s_j$ because according to our assumption, $I$ belongs to $s_j$. Since interval $I$ contains all points in $u_1, \ldots, u_q$, it is hit at least twice by the exact hitting set of $s_j$ which is a contradiction. \qed
\end{proof}

\begin{lemma} \label{lem:kClique}
The clique number of the co-occurrence graph $\Gamma_t$ is at most $k$.
\end{lemma}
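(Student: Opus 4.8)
The plan is to obtain this bound as an almost immediate consequence of Lemma \ref{lem:kPartskClique}, since that lemma already performs the structural work. The key observation is that Lemma \ref{lem:kPartskClique} effectively produces an injection from the vertices of any clique of $\Gamma_t$ into the parts of the partition $P = \{\mathcal{E}_1, \ldots, \mathcal{E}_k\}$: a clique of size $q$ forces $q$ \emph{distinct} parts. Once this is in hand, the desired bound on $\omega(\Gamma_t)$ is a one-line counting (pigeonhole) argument against the fact that $P$ has exactly $k$ parts.

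Concretely, I would start by taking an arbitrary clique $Q = \{u_1, \ldots, u_q\}$ of $\Gamma_t$ with $q = \omega(\Gamma_t)$. Applying Lemma \ref{lem:kPartskClique} to $Q$ yields $q$ pairwise distinct parts $s_1, \ldots, s_q$ of $P$, where each $s_i$ contains an interval $I_i$ whose representative under $t$ is exactly $u_i$. The only property I actually need from Lemma \ref{lem:kPartskClique} is the distinctness of these $q$ parts; the additional adjacency information (that for each edge $(u_i,u_j)$ of $Q$ one of $u_j \in I_i$, $u_i \in I_j$ holds) is not required for this conclusion, though it is what underlies the distinctness.

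Since the partition $P$ consists of exactly $k$ parts and the clique $Q$ of size $q$ witnesses $q$ distinct parts among them, it follows that $q \le k$. As $q = \omega(\Gamma_t)$, we conclude $\omega(\Gamma_t) \le k$, which is the claim. I do not expect any genuine obstacle here: the substantive difficulty — namely, showing that the interval $I$ realizing the outermost edge $(u_1,u_q)$ of the clique cannot lie in any of the parts already used by $u_2,\ldots,u_q$ (otherwise its representative would hit $I$ twice, contradicting exact hittability) — has already been discharged inside the inductive proof of Lemma \ref{lem:kPartskClique}. The present statement is therefore just the clean corollary that converts that injection into the clique-number bound.
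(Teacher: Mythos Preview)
Your proposal is correct and matches the paper's own proof essentially verbatim: the paper also invokes Lemma~\ref{lem:kPartskClique} to obtain $q$ distinct parts from a clique of size $q$, explicitly notes that the additional adjacency property is not needed, and concludes $q \le k$ by comparing against the number of parts in $P$.
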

\begin{proof}
For a clique $K$ of $q$ vertices in $\Gamma_t$, we know from  Lemma \ref{lem:kPartskClique} that there must be $q$ distinct exactly hittable parts $s_1, \ldots s_{q}$ and $q$ intervals $I_1, \ldots I_{q}$ in each part, respectively,  satisfying an additional property which is not important for this argument.  
Therefore, the size of the largest clique in $\Gamma_t$ is at most the number of parts which is at most $k$. \qed
\end{proof}

\begin{proof}[of Theorem \ref{thm:EHS-CF}]
From  Lemma \ref{lem:CFimpliesEHS}, it follows that if there is a CF colouring of a hypergraph $H$ with at most $k$ non-zero colours, then there is a partition of $H$ into at most $k$ exactly hittable hypergraphs.  To prove the other direction, given a partition of interval hypergraph $H$ into $k$ exactly hittable interval hypergraphs, we consider $\Gamma_t$ as defined before Lemma \ref{lem:kPartskClique}.   From Lemma \ref{lem:kClique}, the clique number of $\Gamma_t$ is at most $k$.  From Theorem \ref{thm:Co-occPerf}, we know that $\Gamma_t$ is a perfect graph. By the Perfect Graph Theorem \cite{Gol2004}, $\chi(\Gamma_t) = \omega(\Gamma_t) \leq k$.  Further from Theorem \ref{thm:Co-occChar}, $\chi_{cf}(H) \leq \chi(\Gamma_t) \leq k$.
 Thus, if there exists a partition of interval hypergraph $H$ into $k$ exactly hittable interval hypergraphs, then there exists a CF colouring of $H$ using at most $k$ non-zero colours.  Hence Theorem \ref{thm:EHS-CF} is proved. \qed
\end{proof}
{\bf Conclusion:}  Our algorithm also gives a $k$-Strong Conflict Free Colouring for each $k \geq 1$ for interval hypergraphs which have at least 3 disjoint intervals.  This is achieved by following the approach of writing an LP to hit Type 1 cliques in $G_1$ at least $k$ times while hitting each Type 2 clique as few times as possible.  The LP has separation oracle when $G_1$ is perfect which we know for sure when the interval hypergraph has 3 disjoint intervals.  The case when there are at most two disjoint intervals is  a direction of future work for $k$-SCF colouring.  
\bibliographystyle{plain}
\bibliography{cfc}

\begin{thebibliography}{10}

\bibitem{Abel2017}
Zachary Abel, Victor Alvarez, Erik~D. Demaine, S{\'a}ndor~P. Fekete, Aman Gour,
  Adam Hesterberg, Phillip Keldenich, and Christian Scheffer.
\newblock Three colors suffice: Conflict-free coloring of planar graphs.
\newblock In {\em SODA}, 2017.

\bibitem{MAXCFC2015}
Pradeesha Ashok, Aditi Dudeja, and Sudeshna Kolay.
\newblock {\em Exact and FPT Algorithms for Max-Conflict Free Coloring in
  Hypergraphs}, pages 271--282.
\newblock Springer Berlin Heidelberg, Berlin, Heidelberg, 2015.

\bibitem{Berge1985}
C.~Berge.
\newblock {\em Graphs and Hypergraphs}.
\newblock Elsevier Science Ltd., Oxford, UK, 1985.

\bibitem{CPLGARSS2014}
Panagiotis Cheilaris, Luisa Gargano, Adele~A. Rescigno, and Shakhar
  Smorodinsky.
\newblock Strong conflict-free coloring for intervals.
\newblock {\em Algorithmica}, 70(4):732--749, December 2014.

\bibitem{CS2012}
Panagiotis Cheilaris and Shakhar Smorodinsky.
\newblock Conflict-free coloring with respect to a subset of intervals.
\newblock {\em arXiv preprint arXiv:1204.6422}, 2012.

\bibitem{CFKLMMPSSWW2006}
Ke~Chen, Amos Fiat, Haim Kaplan, Meital Levy, Jir{\'\i} Matou{\v{s}}ek,
  Elchanan Mossel, J{\'a}nos Pach, Micha Sharir, Shakhar Smorodinsky, Uli
  Wagner, et~al.
\newblock Online conflict-free coloring for intervals.
\newblock {\em SIAM Journal on Computing}, 36(5):1342--1359, 2006.

\bibitem{Chudnovsky2005}
Maria Chudnovsky, G{\'e}rard Cornu{\'e}jols, Xinming Liu, Paul Seymour, and
  Kristina Vu\v{s}kovi\'{c}.
\newblock Recognizing berge graphs.
\newblock {\em Combinatorica}, 25(2):143--186, March 2005.

\bibitem{Chudnovsky2006}
Maria Chudnovsky, Neil Robertson, Paul Seymour, and Robin Thomas.
\newblock The strong perfect graph theorem.
\newblock {\em Annals of Mathematics}, 164:51--229, 2006.

\bibitem{Dom2006}
Michael Dom, Jiong Guo, Rolf Niedermeier, and Sebastian Wernicke.
\newblock {\em Minimum Membership Set Covering and the Consecutive Ones
  Property}, pages 339--350.
\newblock Springer Berlin Heidelberg, Berlin, Heidelberg, 2006.

\bibitem{Knuth2000}
Donald E.~Knuth.
\newblock Dancing links.
\newblock {\em Millennial Perspectives in Computer Science}, 1:187--214, 01
  2000.

\bibitem{ELRS2003}
Guy Even, Zvi Lotker, Dana Ron, and Shakhar Smorodinsky.
\newblock Conflict-free colorings of simple geometric regions with applications
  to frequency assignment in cellular networks.
\newblock {\em SIAM Journal on Computing}, 33(1):94--136, 2003.

\bibitem{Garey1979}
Michael~R. Garey and David~S. Johnson.
\newblock {\em Computers and Intractability: A Guide to the Theory of
  NP-Completeness}.
\newblock W. H. Freeman \& Co., New York, NY, USA, 1979.

\bibitem{Gol2004}
Martin~Charles Golumbic.
\newblock {\em Algorithmic Graph Theory and Perfect Graphs (Annals of Discrete
  Mathematics, Vol 57)}.
\newblock North-Holland Publishing Co., Amsterdam, The Netherlands, The
  Netherlands, 2004.

\bibitem{Grotschel1981}
Martin Gr{\"o}tschel, L{\'a}szl{\'o} Lov{\'a}sz, and Alexander Schrijver.
\newblock The ellipsoid method and its consequences in combinatorial
  optimization.
\newblock {\em Combinatorica}, 1(2):169--197, Jun 1981.

\bibitem{grotschel2012}
Martin Gr{\"o}tschel, L{\'a}szl{\'o} Lov{\'a}sz, and Alexander Schrijver.
\newblock {\em Geometric algorithms and combinatorial optimization}, volume~2.
\newblock Springer Science \& Business Media, 2012.

\bibitem{KATZ}
Matthew~J Katz, Nissan Lev-Tov, and Gila Morgenstern.
\newblock Conflict-free coloring of points on a line with respect to a set of
  intervals.
\newblock {\em Computational Geometry}, 45(9):508--514, 2012.

\bibitem{Keller2018}
Chaya Keller and Shakhar Smorodinsky.
\newblock Conflict-free coloring of intersection graphs of geometric objects.
\newblock In {\em Proceedings of the Twenty-Ninth Annual ACM-SIAM Symposium on
  Discrete Algorithms}, SODA '18, pages 2397--2411, Philadelphia, PA, USA,
  2018. Society for Industrial and Applied Mathematics.

\bibitem{NDR2018}
N.~S. Narayanaswamy, S.~M. Dhannya, and C.~Ramya.
\newblock Minimum membership hitting sets of axis parallel segments.
\newblock In Lusheng Wang and Daming Zhu, editors, {\em Computing and
  Combinatorics}, pages 638--649, Cham, 2018. Springer International
  Publishing.

\bibitem{PJGT2009}
J{\'a}nos Pach and G{\'a}bor Tardos.
\newblock Conflict-free colourings of graphs and hypergraphs.
\newblock {\em Combinatorics, Probability and Computing}, 18(05):819--834,
  2009.

\bibitem{Sm2007}
Shakhar Smorodinsky.
\newblock On the chromatic number of geometric hypergraphs.
\newblock {\em SIAM Journal on Discrete Mathematics}, 21(3):676--687, 2007.

\bibitem{Sm2013}
Shakhar Smorodinsky.
\newblock Conflict-free coloring and its applications.
\newblock In {\em Geometry—Intuitive, Discrete, and Convex}, pages 331--389.
  Springer, 2013.

\bibitem{VaziraniApproxAlgo}
Vijay~V. Vazirani.
\newblock {\em Approximation Algorithms}.
\newblock Springer-Verlag, Berlin, Heidelberg, 2001.

\bibitem{West}
Douglas~B. West.
\newblock {\em Introduction to Graph Theory}.
\newblock Prentice Hall, 2 edition, September 2000.

\end{thebibliography}
\end{document}